\documentclass[prodmode,permissions]{acmsmall-ec16}
\acmVolume{X}
\acmNumber{X}
\acmArticle{X}
\acmYear{2016}
\acmMonth{2}

\usepackage{times}
\usepackage{graphicx} 
\usepackage{subfigure} 

\usepackage[numbers,sort&compress]{natbib} 

\usepackage{latexsym}
\usepackage{amsmath,amssymb,amsfonts}
\usepackage{graphicx} 
\usepackage{float,url, subfigure, enumitem, color}

\usepackage[utf8]{inputenc}
\usepackage[english]{babel}

\usepackage{mathtools}

\DeclarePairedDelimiter\floor{\big\lfloor}{\big\rfloor}

\usepackage{nicefrac}
\newcommand{\xfrac}[3][]{\nicefrac[#1]{#2}{#3}}

\newtheorem{thm}{Theorem}
\newtheorem{lem}[thm]{Lemma}
\newtheorem{prop}[thm]{Proposition}
\newtheorem{cor}[thm]{Corollary}
\newtheorem{defn}[thm]{Definition}
\newtheorem{rem}[thm]{Remark}

\newcommand{\R}{\mathbb{R}}

\begin{document}

\title{A Model for Social Network Formation}

\markboth{Celis and Mousavifar}{Network Formation}

\title{A Model for Social Network Formation: \\
Efficiency, Stability and Dynamics}

\author{L. Elisa Celis \affil{\'Ecole Polytechnique F\'ed\'erale de Lausanne}  
	Aida Sadat Mousavifar \affil{University of Tehran}}

\begin{abstract}

We introduce a simple network formation model for social networks. Agents are nodes, connecting to another agent by building a directed edge (or accepting a connection from another agent) has a cost, and reaching (or being reached by) other agents via short directed paths has a benefit; in effect, an agent wants to reach others quickly, but without the cost of directly connecting each and every one. We prove that asynchronous edge dynamics always converge to a stable network; in fact, for nontrivial ranges of parameters this convergence is fast. Moreover, the set of fixed points of the dynamics form a nontrivial class of networks. For the static game, we give classes of efficient networks for nontrivial parameter ranges and further study their stability. We close several problems, and leave many interesting ones open.

\end{abstract}



\maketitle

\section{Introduction}

Online social networks such as Facebook and Twitter are now an ubiquitous part of modern life. 
Moreover, given the prevalence of economic situations in which the network of relationships between agents play an important role in outcomes, it is essential to rigorously understand how networks form and what network structures are likely to emerge.  
Large interdisciplinary subfields that combine economics, sociology, mathematics and computer science in the study of social networks are emerging (see \cite{csw2005} for a survey). 
While many models for social network exist, most are either stochastic (i.e., probabilistic models) or are learned models (i.e., constructed by fitting a set of parameters). 
The game theoretic approaches to network formation that exist are  largely motivated by games where network infrastructure is being built and costs are shared amongst agents (see, e.g., {\cite{AGTbook}} Chapter 19), and do not necessarily capture natural properties of online social networks. 
We introduce a simple directed network model that has a natural interpretation with respect to many online social networks. 
In this model the agents are nodes in the network and the model is defined by three key parameters:
\begin{enumerate}
\item the \emph{cost} $c_s$ of directly connecting to another agent (i.e., making a friend request),  
\item the \emph{cost} $c_\ell$ of accepting a connection another agent (i.e., confirming a friend request),  and 
\item the \emph{distance} $k$ (i.e., the maximum path length) that suffices for gaining utility from an indirect connection to another agent.
\end{enumerate}
Agents trade off decisions between the cost of maintaining edges against the rewards (in terms of connectivity) from doing so.
Allowing $c_s, c_\ell > 0$ captures many online social networks such as Facebook and LinkedIn in which one agent initiates a connection request and the other choses to accept or decline. 
When $c_\ell=0$, the model captures other online social networks such as Twitter in which a connection can be made unilaterally. 
The distance $k$ captures the maximum path distance that suffices for deriving utility from (indirect) connections;  a generalization of this model can further allow \emph{target sets} T(v) which defines the set of agents that $v$ would like to reach within distance $k$.\footnote{See Section~\ref{sec:conclusion} for a discussion -- in this paper we only consider $T(v) = V$ for all agents $v$.}

In particular, we study natural dynamics in which agents periodically make asynchronous decisions on whether to add or sever edges. 
We show that irrespective of the initial network, the dynamic process converges; in fact when $c_s, c_\ell > 0$ the convergence is \emph{fast}. 
Furthermore, the fixed points of this process form a nontrivial class of networks.\footnote{This is in stark contrast to related settings in which the only fixedpoints are cycles and empty graphs \cite{BaGo2000}.}
We further study the static game, and prove that for a nontrivial range of parameters a \emph{flower graph} is efficient and stable, and a \emph{Kautz graph} is symmetric-efficient. 
We leave open the technically challenging question of whether symmetric networks that are also stable exist for nontrivial parameter ranges. 
Finally, we study the extremal properties of stable networks and show that the price of anarchy can be arbitrarily bad, although the price of stability is 1.

Lastly, this model allows us to define a multi-dimensional generalized clustering coefficient which relates to the \emph{stability} of the network; this could be of independent interest in the study of social networks 
{\cite{wf1994}}.

\section{Preliminaries}
\label{sec:prelim}


Let $V$ be a set of agents with $n = |V|$. 
\begin{defn}[Bidirected Network]
A bidirected network $G = (V, E_s, E_\ell)$ is a graph with vertex set $V$ with two \emph{types} of directed edges; {speaking} edges $s_{uv} \in E_s$,\footnote{An edge $s_{uv}$ can be thought of as $u$ \emph{initiating} contact with $v$.} and {listening} edges $\ell_{vu} \in E_\ell$.\footnote{A listening edge $\ell_{vu}$ can be thought of as $v$ \emph{accepting} contact from $u$.}
\end{defn}
Note that $s_{uv}$ exists independently of $\ell_{uv}$; both, one, or none may be present in the network. When clear from context, with some abuse of notation, we drop the $s/\ell$ demarkations and simply refer to edges $uv \in E$.
We let $ds^+_G(v) = |\{w | s_{vw} \in E_s \}|$ denote the number of outgoing speaking edges of $v$, and let $ds^-_G(v) =|\{w | s_{wv} \in E_s \}|$ denote the number of incoming speaking edges to $v$. The analogous  definition is used for listening out-degree ($d\ell^+_G(v)$) and in-degree ($d\ell^-_G(v)$). 

\begin{defn}[Speaking and Listening Reachability]
We say there is a \emph{speaking path of length $k$} from $v$ to $u$ if there exists a set of directed edges $s_{vv_1}, s_{v_1v_2}, \ldots, s_{v_ku} \in E_s$ and edges $\ell_{uv_k}, \ell_{v_kv_{k-1}}, \ldots, \ell_{v_1v}\in E_{\ell}$. We say that a vertex $u$ that has a speaking path of length at most $k$ from $v$ is \emph{$k$-speaking-reachable from $v$}, and let $R^s_{G,k}(v) \subseteq V$ be the set of all such vertices. Listening paths, listening reachability, and the set of listening-reachable vertices  $R^\ell_{G,k}(v) \subseteq V$ are defined in an analogous manner.
\end{defn}
With some abuse of notation, when $k$ is clear from context we drop it from the notation above.  Note that if $u$ is speaking-reachable from $v$ then $v$ is listening-reachable from $u$.

\subsection{Our Model}

Each agent $v \in V$ has a strategy $S_v = (S^s_v, S^\ell_v)$, which consists of subsets of agents $S^s_v, S^\ell_v \subseteq V$. Thinking of agents as vertices, $S^s_v$ (respectively $S^\ell_v)$ corresponds to the set of vertices that $v$ connects to by building speaking (respectively listening) edges $s_{vu}$ (respectively $\ell_{vu}$). Thus, the strategy vector $\mathbf{S} = (S_1,\ldots, S_n)$ defines a bidirected graph $G = (V,E_s, E_\ell)$ where $E_s = \{vu | u \in S^s_v\}$ and $E_\ell = \{vu | u \in S^\ell_v\}$. With some abuse of notation, we often refer to $G$ as the set of strategies and will use $G$ and $\mathbf{S}$ interchangeably.

The utility of $v$ is given by
\[U_G(v) = U^s_{G,k}(v) + U^\ell_{G,k}(v) \]
where 
\[ U^s_{G,k}(v)  = \vert R^s_{G,k}(v) \vert - c_s \cdot  ds_G^+(v) \vert  \mbox{ and } \quad 
 U^\ell_{G,k}(v)  = \vert R^\ell_{G,k}(v) \vert - c_\ell \cdot  d\ell_G^+(v)  \]
are the utilities derived from speaking and listening respectively. The \emph{costs} $c_s$ and $c_\ell$ capture the cost of maintaining speaking and listening edges respectively. 

A natural special case is that in which one of the costs is 0 (without loss of generality $c_\ell = 0$).  For such a model, an agent can always set $S^\ell_v = V$ without loss to her utility. Hence, the strategy space boils down to $S^s_v$. Moreover, we can consider only the speaking portion of the utility $U_{G,k}(v) = U^s_{G,k}(v)$ without loss of generality. In such cases we drop all $s/\ell$ demarkations and simply think of a directed graph $G = (V,E) = (V, E_s)$. This special case, when we further assume that $k = \infty$, is equivalent to the network model in \cite{BaGo2000}. 

\subsection{Dynamics}

\begin{figure}[t]
\begin{center}
\includegraphics[width = 2.5in]{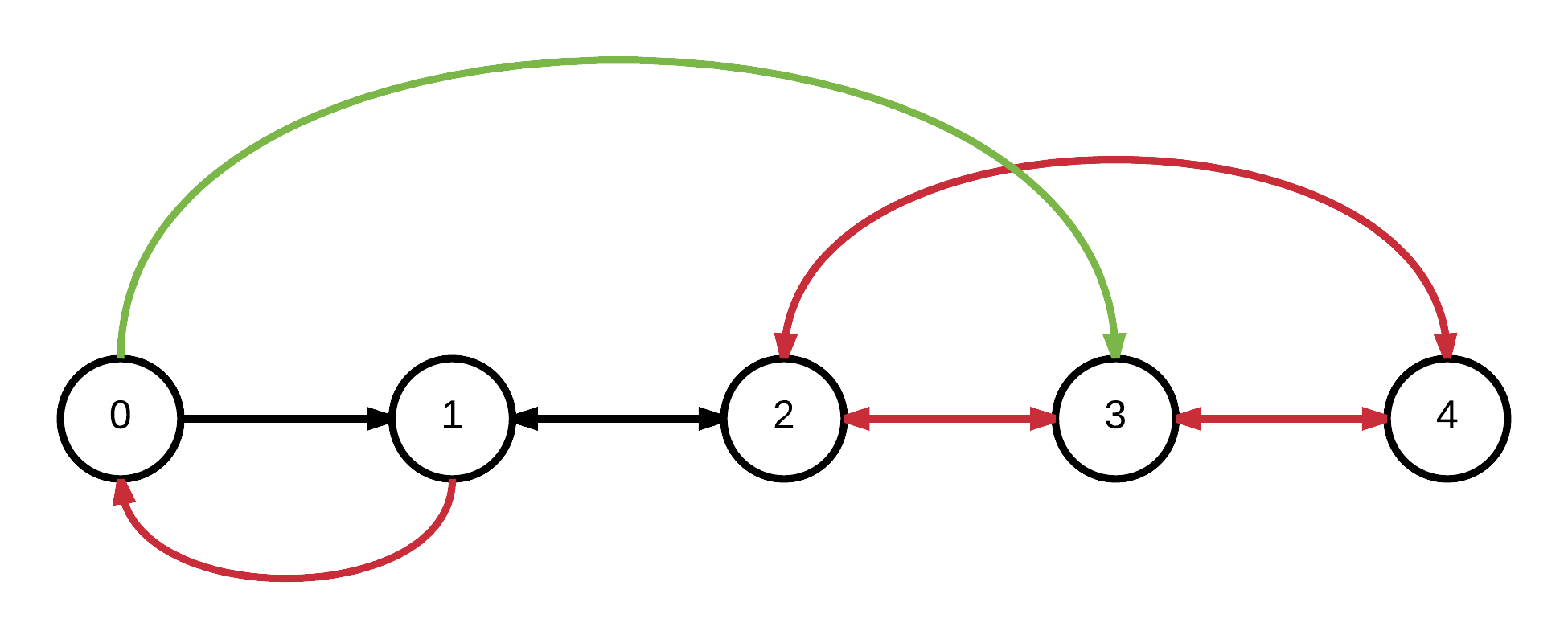}
\caption{Example network depicting addable edges (green), removable edges (red) and existing edges that are not removable (black) for $k=2$, $c = c_s=1.5$ and $c_\ell = 0$; note that we only depict speaking edges (one can assume that all listening edges are present).}
\end{center}
\label{fig:example}
\end{figure}

In this paper we only consider dynamics that are asynchronous (i.e., one agent updates at a time) and stochastic (i.e., agents update in a random order). 
A shorthand notation for the network obtained by adding (alternatively, deleting) the edge $vw$ from an existing network $G$ is $G + vw$ (alternatively, $G-vw)$. Similarly, we let $G + S_v$ be the network obtained by adding all $vu$ edges where $u \in S_v$ to $G$. 
The following definition is convenient for a variety of our definitions and results.
\begin{defn}[Addable and Removable Edges]
We say an edge ${uv} \in G$ is {removable} if $U_{G-{uv}}(u) > U_G(u)$. Similarly, we say an edge $uv \not\in G$ is addable if $U_{G+{uv}}(u) > U_G(u)$.
\end{defn}
See Figure~\ref{fig:example} for an example of a network where the addable and removable edges are depicted. 
We can then define the \emph{edge dynamics} as follows: 
\begin{defn}[Edge Dynamics]
In each round, one potential (speaking or listening) edge $vw$ is selected at random. Without loss of generality, assume it is a speaking edge. If $s_{vw} \in E_s$ then the edge is deleted if and only if it is removable. Alternatively, if $s_{vw} \notin E_s$ then the edge is added if and only if it is addable. The analogous  definition is used for listening edges. 
\end{defn}
The bulk of our convergence results concern edge dynamics. In some cases they also apply to a natural vertex dynamics (see Definition~\ref{def:vertex_dynamics}).

\begin{rem}
Note that we do not consider \emph{best-response dynamics} in which, in each time step, a vertex $v$ is selected at random and she updates update her strategy $S_v$ in order to maximize her utility (potentially by changing multiple edges simultaneously) with respect to the current strategies $\mathbf S_{-v}$ of the other agents. Indeed, our results do not extend naturally to this setting. For some special cases of our model, if the dynamics are performed \emph{synchronously}, the results from \cite{BaGo2000} apply.
\end{rem}

\subsection{Stability and Efficiency}

Denote by $\mathbf{S}_{-v}$ by the $(n-1)$-dimensional vector of the strategies played by all agents other than $v$.  With some abuse of notation we  use $({S_v,\mathbf S_{-v}})$ and $\mathbf S$ and $G$ interchangeably as is   convenient. 
\begin{defn}[Stability]
A strategy vector $\mathbf{S}$ is said to be {stable} if for all agents $v$ and each potential strategy $S'_v \subseteq V$ , we have that 
\[ U_{S_v,\mathbf S_{-v}}(v) \geq U_{S'_v,\mathbf S_{-v}}(v) .\]
This is equivalent to saying that $\mathbf S$ is a Nash equilibrium. 
\end{defn}
In other words, no agent $v$ has any \emph{incentive} to change her strategy from $S_v$ to $S'_v$, assuming that all other agents stick to their current strategies. Observe that such a solution is self-enforcing in the sense that once the agents are playing such a solution, no one has any incentive to deviate.
{In fact, for our model, something stronger holds: 
\begin{prop}
A  strategy vector is stable if and only if no edge is addable or removable.
\end{prop}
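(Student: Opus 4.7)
The plan is to handle the two directions of the equivalence in turn. The forward direction is immediate: since stability (a Nash equilibrium) rules out every unilateral deviation, the particular deviations that add or remove a single edge cannot strictly improve the tail agent's utility, so no edge is addable or removable. For the reverse direction I would assume no edge is addable or removable, fix an agent $v$ with arbitrary alternative strategy $S'_v$, and aim to show $U_{G'}(v) \leq U_G(v)$, where $G'$ is the graph obtained by replacing $v$'s strategy with $S'_v$. A first step is to decouple speaking from listening: any speaking path from $v$ begins at $v$ and (being simple) does not revisit it, so the listening edges it uses are controlled by intermediate vertices rather than by $v$. Consequently $U^s_{G,k}(v)$ depends only on $S^s_v$ (given $\mathbf{S}_{-v}$) and, symmetrically, $U^\ell_{G,k}(v)$ depends only on $S^\ell_v$, so it suffices to prove $\Delta U^s \leq 0$.

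Next, I would expose the relevant combinatorial structure. For each potential edge $e = s_{vu}$, define $P_e$ to be the set of vertices reachable from $v$ by a speaking path of length at most $k$ whose first edge is $e$; because $v$ appears only at the start of any such path, $P_e$ depends only on $\mathbf{S}_{-v}$. Then $R^s_{G,k}(v) = \bigcup_{e \in S^s_v} P_e$ and analogously for $R^s_{G',k}(v)$. The non-addable hypothesis yields $|P_e \setminus R^s_{G,k}(v)| \leq c_s$ for every $e \notin S^s_v$, while the non-removable hypothesis yields $|L_e| \geq c_s$, where $L_e = P_e \setminus \bigcup_{e' \in S^s_v \setminus \{e\}} P_{e'}$. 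A key structural observation is that the $L_e$'s are pairwise disjoint: any vertex in $L_e$ has all of its $v$-rooted paths beginning with the single edge $e$, and this cannot simultaneously hold for two distinct first edges.

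Writing $T_+ = {S^s_v}' \setminus S^s_v$ and $T_- = S^s_v \setminus {S^s_v}'$, the utility change reads
\[\Delta U^s = |R^s_{G',k}(v) \setminus R^s_{G,k}(v)| - |R^s_{G,k}(v) \setminus R^s_{G',k}(v)| - c_s(|T_+| - |T_-|).\]
A union bound over $T_+$ combined with the non-addable inequality gives $|R^s_{G',k}(v) \setminus R^s_{G,k}(v)| \leq c_s |T_+|$. The aim is the matching lower bound $|R^s_{G,k}(v) \setminus R^s_{G',k}(v)| \geq c_s |T_-|$; together these bounds yield $\Delta U^s \leq 0$, and the listening case is handled identically. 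Establishing this lower bound is the main obstacle, since a vertex in $L_e$ for some $e \in T_-$ may be ``re-covered'' by some $P_{e'}$ with $e' \in T_+$ and thereby disappear from $R^s_{G,k}(v) \setminus R^s_{G',k}(v)$. My plan is an accounting argument that charges each re-covered vertex against the coverage budget of the corresponding $T_+$-edge, showing that any slack on the loss side is absorbed by a matching slack on the gain side, so that the two bounds cancel and $\Delta U^s \leq 0$ still follows.
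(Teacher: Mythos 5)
The paper states this proposition without any proof, so there is no ``paper's own proof'' to compare against; your proposal has to stand on its own, and as written it does not. Your forward direction is correct and immediate, and your setup for the converse is sound: the speaking/listening decoupling is valid (the listening edges required along a speaking path out of $v$ are owned by the intermediate agents, not by $v$), the sets $P_e$ indeed depend only on $\mathbf{S}_{-v}$ because a (simple) speaking path from $v$ never revisits $v$, and the inequalities you extract are exactly right: non-addability gives $|P_e \setminus R^s_{G,k}(v)| \leq c_s$ for $e \notin S^s_v$, non-removability gives $|L_e| \geq c_s$ for $e \in S^s_v$, and the $L_e$ are pairwise disjoint.

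The gap is that your target lower bound $|R^s_{G,k}(v) \setminus R^s_{G',k}(v)| \geq c_s|T_-|$ is \emph{not} a consequence of those hypotheses, so no accounting argument carried out purely at the level of the abstract set system $\{P_e\}$ can close the proof. Concretely: take ground set $\{1,\dots,7\}$, $P_{e_1}=\{1,2,3\}$, $P_{e_2}=\{4,5,6\}$, $P_{e_3}=\{1,2,\dots,7\}$, $c_s=2.5$, $S^s_v=\{e_1,e_2\}$ and $S'^s_v=\{e_3\}$. Then $|P_{e_3}\setminus R|=|\{7\}|=1\leq c_s$ (not addable), $|L_{e_1}|=|L_{e_2}|=3\geq c_s$ (neither removable), and the $L_e$ are disjoint -- every hypothesis you invoke holds -- yet $\Delta U^s = (7-2.5)-(6-5)=3.5>0$. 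The slack available on the gain side ($c_s-1=1.5$) cannot absorb the $c_s|T_-|=5$ that evaporates on the loss side, so the proposed charging scheme would ``prove'' a false statement. Any correct proof must therefore use structure specific to reachability sets rather than generic coverage bounds. For example, when $k=\infty$ one has a nesting property: if the head $w$ of a candidate new edge lies in $P_e$ for some $e\in S^s_v$, then by transitivity of reachability in the graph with $v$ deleted one gets $P_{s_{vw}}\subseteq P_e$, which forbids the configuration above in which one $P_{e_3}$ strictly covers two incomparable sets $P_{e_1},P_{e_2}$. For finite $k$ reachability is not transitive within the length budget, this nesting fails, and it is not clear that the proposition can be rescued by a local argument at all; you should treat that case as a genuine open issue rather than as bookkeeping to be supplied later.
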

\emph{Pairwise stability} is a common strengthening of the notion of stability. It is natural in social networks where, effectively, a link between two agents is formed only if both endpoints are in agreement, but either can unilaterally delete an edge. In our model, an agent's utility is never decreased by an incoming edge, hence there is no difference between stability and pairwise stability. However, in the bidirected case (when $c_s, c_\ell > 0$), an extended notion of pairwise stability where a speaking edge and its corresponding listening edge are consider in conjunction is natural.
\begin{defn}[Bidirected Pairwise Stability]
A strategy vector $\mathbf{S}$ is said to be {bi-pairwise stable} if for all pairs of agents $u, v$ 
\[ U_{\mathbf S - s_{uv} }(u) \leq U_{\mathbf S }(u) , \; U_{\mathbf S - \ell_{uv} }(u) \leq U_{\mathbf S }(u), \]
and if 
\[ U_{\mathbf S + s_{uv} + \ell_{vu}}(u) > U_{\mathbf S }(u) \; \mbox{ then } U_{\mathbf S + s_{uv} + \ell_{vu}}(v) < U_{\mathbf S }(v).
\]
\end{defn}
For the remainder of this paper we refer to this notion as pairwise stability.

Often, notion of {fairness} or {global optimality} are important considerations.  
The utilitarian objective welfare of a set of strategies is the \emph{collective utility} of all of the agents; i.e., for the strategy set $G$, it is $\varphi(G) = \sum_v U_G(v)$. 
\begin{defn}[Efficiency and Symmetry]
We say a set of strategies $\mathbf S$, and the network $G$ it defines, is  {efficient} if it maximizes $\varphi(G)$. It is {symmetric} if $U_G(v) = U_G(u)$ for all $u,v \in V$, and is {asymmetric} otherwise. It is symmetric-efficient if it maximizes $\varphi(G)$ over to the set of symmetric networks. 
\end{defn}
Note that, a priori, efficiency, symmetry, and stability need not be satisfied simultaneously. One aim of this work is to explore these relationships for our model.


\subsection{Preliminary Observations}

Before stating our main results, we make a few preliminary observations which will become useful in later proofs.
\begin{lem}
\label{pairwise_stable_if_stable}
If $G$ is bi-pairwise stable then $G$ is stable.
\end{lem}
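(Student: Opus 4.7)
The plan is to use the proposition stated just before the lemma, which says that $G$ is stable if and only if no edge is addable or removable. So it suffices to rule out both in any bi-pairwise stable $G$. The ``no removable edge'' half is immediate: if $s_{uv} \in G$ (respectively $\ell_{uv} \in G$) were removable, then $U_{G - s_{uv}}(u) > U_G(u)$ (respectively $U_{G - \ell_{uv}}(u) > U_G(u)$), directly contradicting the first (respectively second) clause of bi-pairwise stability applied to the ordered pair $(u, v)$.

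For the ``no addable edge'' half, I would argue by contradiction in two symmetric cases, according to whether an addable edge is a speaking or a listening edge. The key structural fact driving both cases is that the definition of a speaking path couples speaking and listening edges in \emph{matched pairs}: a step $a \to b$ along a speaking path requires both $s_{ab} \in E_s$ and $\ell_{ba} \in E_\ell$. Consequently, a missing speaking edge $s_{uv}$ can be addable only if its listening counterpart $\ell_{vu}$ is already in $G$; otherwise inserting $s_{uv}$ creates no new speaking path from $u$, and the utility change is $-c_s < 0$. Symmetrically, $\ell_{uv}$ can be addable only when $s_{vu} \in G$.

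Now assume $s_{uv} \notin G$ is addable, so $\ell_{vu} \in G$. Reading the path definition again, $\ell_{vu}$ can appear in a speaking path only at a position $\ell_{v_i v_{i-1}}$ with $v_i = v$ and $v_{i-1} = u$, whose matched speaking edge is precisely $s_{uv}$. Since $s_{uv} \notin G$, the edge $\ell_{vu}$ participates in no speaking path of $G$, so deleting it leaves every reachability set $R^s_G(\cdot)$ and $R^\ell_G(\cdot)$ unchanged while saving $v$ the cost $c_\ell > 0$. This yields $U_{G - \ell_{vu}}(v) = U_G(v) + c_\ell > U_G(v)$, contradicting the second bi-pairwise clause at the ordered pair $(v, u)$. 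The case of an addable $\ell_{uv}$ is handled by an entirely analogous argument with the roles of speaking and listening swapped: $s_{vu}$ becomes the orphan edge, and its removability violates the first clause at $(v, u)$. The only delicate point, and the main thing to verify carefully, is this orphan claim about matched pairs; note that the third (joint-addition) clause of bi-pairwise stability is not invoked at all, so the removal conditions alone already suffice whenever $c_s, c_\ell > 0$.
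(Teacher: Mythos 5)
Your proof is correct and is in fact more careful than the paper's own, but it reaches the contradiction for addable edges by a genuinely different route. The paper's proof dispatches removable edges exactly as you do (directly from the two deletion clauses) and then asserts that an addable $s_{vw}$ or $\ell_{vw}$ ``contradicts the definition of pairwise stability,'' implicitly appealing to the third, joint-addition clause; to make that rigorous one still needs your matched-pair observation (an addable $s_{uv}$ forces $\ell_{vu} \in E_\ell$, so $G + s_{uv} + \ell_{vu} = G + s_{uv}$, and then $U_{G+s_{uv}}(v) \geq U_G(v)$ violates the third clause since an incoming speaking edge never hurts $v$). You instead bypass the third clause entirely: the matched-pair structure makes the pre-existing $\ell_{vu}$ an orphan, hence removable, hence a violation of the deletion clauses. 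What each approach buys: yours is self-contained in the first two clauses and exposes the structural fact that orphan edges are dead weight (the same fact underlying Lemma~\ref{prop:bi_stable}), but it genuinely needs $c_s, c_\ell > 0$ to make the orphan \emph{strictly} removable, as you note. The paper's route via the third clause works even when one cost is zero, so if you want the lemma exactly as stated (with no hypothesis on the costs), you should close that residual case by the third-clause argument above; in the regime where the paper actually defines and uses bi-pairwise stability ($c_s, c_\ell > 0$), your proof is complete as written.
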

\begin{proof}
For sake of contradiction, assume that $G$ is pairwise stable but is not stable.
Since $G$ is pairwise stable so for any speaking edge $s_{vw} \in E_s$ we have $U_G(v) \geq U_{G-s_{vw}}(v)$ and for any listening edge $\ell_{vw} \in E_\ell$ we have $U_G(v) \geq U_{G-\ell_{vw}}(v)$. Hence, $G$ has no removable edge. And because we assume that $G$ is not stable so there must exist a speaking edge $s_{vw} \not\in E_s$ where $U_G(v) < U_{G+s_{vw}}$ or a listening edge $\ell_{vw} \not\in E_\ell$ where $U_G(v) < U_{G+\ell_{vw}}(v)$ which contradicts the definition of pairwise stability.
\end{proof}

\begin{defn}[Complete Edges]
\label{def:complete}
With some abuse of notation, we say a speaking edge $s_{vw}$ is complete if 
\[s_{vw} \in E_s \Rightarrow \ell_{wv} \in E_\ell \] We say a listening edge $\ell_{vw}$ is complete if  
\[\ell_{vw} \in E_\ell \Rightarrow s_{wv} \in E_s .\] 
\end{defn}
 In any stable or efficient network, if $c_s, c_\ell > 0$, all edges are complete. Hence, despite allowing unilateral actions, agreement naturally emerges.
\begin{lem}
\label{prop:bi_stable}
If $c_s>0, c_\ell>0 $, then for any stable, pairwise stable, or efficient network $G$ all speaking and listening edges are complete.
\end{lem}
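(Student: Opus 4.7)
The plan is to show that an incomplete edge is completely \emph{useless} for reachability purposes, and therefore only incurs cost. Concretely, I would first make the following structural observation directly from the definition of speaking/listening paths: every speaking path $s_{vv_1},s_{v_1v_2},\ldots$ in the definition requires the companion listening edges $\ldots,\ell_{v_2v_1},\ell_{v_1v}$ to be present, and symmetrically for listening paths. Hence a speaking edge $s_{vw}$ that is \emph{not} complete (i.e., with $\ell_{wv}\notin E_\ell$) cannot be the first step of any speaking path from $v$, nor can it appear as a companion edge in any listening path terminating at $v$. Dually for an incomplete listening edge.

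Given this, suppose for contradiction that $G$ has an incomplete edge; WLOG say $s_{vw}\in E_s$ but $\ell_{wv}\notin E_\ell$ (the listening case is symmetric). I would then argue that $R^s_{G-s_{vw},k}(x) = R^s_{G,k}(x)$ and $R^\ell_{G-s_{vw},k}(x)=R^\ell_{G,k}(x)$ for every vertex $x$: by the observation above, no speaking or listening path in $G$ uses $s_{vw}$, so deleting this edge changes no reachability set. The only quantity in the utility vector that changes is $ds^+_G(v)$, which drops by one, so
\[
U_{G-s_{vw}}(v) \;=\; U_G(v) + c_s \;>\; U_G(v),
\]
while $U_{G-s_{vw}}(u) = U_G(u)$ for all $u\neq v$.

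From this single computation all three cases of the lemma fall out immediately. For stability, the edge $s_{vw}$ is removable, so Lemma~\ref{pairwise_stable_if_stable}'s characterization (or directly the definition of stability by letting $v$ deviate to $S_v\setminus\{w\}$ on the speaking side) is contradicted. For (bi-)pairwise stability, the condition $U_{\mathbf S - s_{uv}}(u)\le U_{\mathbf S}(u)$ in the definition is violated with strict inequality the other way. For efficiency, $\varphi(G-s_{vw}) = \varphi(G) + c_s > \varphi(G)$, so $G$ does not maximize social welfare. The incomplete listening case is handled identically, exchanging $s$ and $\ell$ and using $c_\ell>0$.

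The main obstacle, and the only place one needs to be careful, is the reachability-invariance claim: one must trust the exact definition of a speaking/listening path, which bundles forward speaking edges with reverse listening edges. Once this is stated cleanly, the rest of the argument is bookkeeping. I would therefore lead the proof with an explicit one-line justification that an incomplete edge appears in \emph{no} $k$-path of either type, and then close out the three cases in parallel via the identity $U_{G-s_{vw}}(v)=U_G(v)+c_s$.
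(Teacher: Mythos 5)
Your proposal is correct and follows essentially the same route as the paper's proof: both observe that because a speaking path requires the companion listening edges, an incomplete edge $s_{vw}$ lies on no speaking path, so deleting it leaves all reachability sets unchanged while saving $c_s$, which simultaneously contradicts stability, pairwise stability, and efficiency. Your version is slightly more explicit about the reachability-invariance step (and about why other vertices' utilities are unaffected), but the substance is identical.
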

\begin{proof}
We will prove the contrapositive. Assume that there exist vertices $v, w \in V$ such that $s_{vw} \in E_s $ and $\ell_{wv} \notin E_\ell$. 
Since $\ell_{wv} \notin E_\ell$, if for any node $z$ we have $z \in R^s_{G,k}(v)$, 
then, by the definition of reachability, there exists a path from $v$ to $z$ that does not use edge $s_{vw}$. 
Thus there is no node $z$ such that $z \in  R^s_{G,k}(v)$ and $z \not\in  R^s_{G-s_{vw},k}(v)$. 
In other words $\vert  R^s_{G-s_{vw},k}(v) \vert = \vert  R^s_{G,k}(v) \vert$, and the positive component of the utility is the same for $v$. However, $ds^+_{G-s_{vw}}(v) = ds^+_{G}(v)-1$, so the negative component of the utility is reduced by $c_s$ in $G-s_{vw}$. Thus, $U_{G-s_{vw}}(v)= U_G(v)+c_s$ and, since $c_s>0$, we conclude that $U_{G-s_{vw}}(v) > U_G(v)$. Hence, $G$ is could not be stable or pairwise stable. Moreover, deleting $s_{vw}$ does not affect on the utility of other vertices so $\varphi(G-s_{vw}) > \varphi(G)$. Hence $G$ is not efficient.
The proof follows analogously for $\ell_{vw}$.
\end{proof}

%

\section{Related Work}



Due to the vast range of applications; from sociology to commerce, biology and physics, with drastically different underlying properties, many models have been developed and studied in depth (see  \cite{Newm2003} for a survey).
Starting with $G(n,p)$ \cite{ErRe1960,Boll2001}, stochastic models have often taken a forefront. 
Depending on the observed graph properties, different models take the forefront, such as preferential attachment models \cite{BA1999} for specific degree distributions, or small-worlds models \cite{WaSt1998} for capturing social networks.  An alternate approach, is to take an existing network and fit a model using techniques from machine learning. For example, a the authors of \cite{KJJFYY2008} attempt to understand the Twitter network by fitting a stochastic model.
However, while stochastic  and learned models can explain on a macro level what is occurring in a network, on a micro level, i.e., looking at individual nodes and its edges, they remain uninformative; the motivation as to why a node would maintain an edge is abstracted away. We instead consider game theoretic models of a network in which each node is a selfish agent and decided if and whom to connect to based on her utility.


There has been a lot of very interesting work on network formation (see \cite{Jack2005} and {\cite{AGTbook}} Chapter 19 for nice surveys coming out of the Economics and Algorithmic Game Theory literature respectively). Myerson was the first to consider such models (see, e.g., \cite{Myer1977,AuMy1988}). However, formulated the problem as a \emph{cooperative} game where agents worked together towards a common goal; in our setting we assume agents have individual, or \emph{selfish}, goals. In another line of work, global connection games (see \cite{AGTbook}) are studied in which agents are not nodes in the network, rather vested parties in (individual) global connectivity properties of the game. 

More closely related models consider selfish agents that are nodes in \emph{undirected} networks. 
\cite{JaWo1996} introduced a model for the study of the (static) stability of undirected networks. Also known as the \emph{local connection game} (see \cite{AGTbook}), nodes have discounted (based on path length) rewards for being connected to another agent, and cost for making a link. Their goal was to understand the relationship between stability and efficiency, which led to further results in this direction (see, e.g., \cite{DuMu1997,Jack2001}). 
The authors of \cite{Watt1999} consider edge dynamics for this undirected model. This is further studied by \cite{JaWa2002} where it is shown that the stochastic best response dynamics may not converge; this is in contrast to our model which will always converge to a stable network.

 Directed networks allow one individual to connect
to another without the consent of the second individual, and thus applications are to settings such as Twitter \emph{following}, while undirected network capture social networks where links are reciprocal, such as Facebook \emph{friendship}. 
%
The difference between directed and undirected graphs is not just a technicality when it comes to modeling.  
In undirected networks, edges are implicitly \emph{reciprocal}, hence consent is required from both endpoints; thus, undirected models are suitable for many forms of economic and social relationships. For directed networks, however, a vertex can link directly to another without reciprocally; thus, directed models are more suitable for capturing interactions that are passive in one direction, as with the consumption of public content.  
Not only are the applications and modeling considerations distinct, but they can lead to dramatic differences in the resulting graphs, their properties, and their efficiency.

The closest related work to our setting, by Bala and Goyal \cite{BaGo2000},  studies the stability, efficiency, and dynamics of \emph{directed} networks. In fact, their can be viewed as a special case of ours when $c_\ell = 0$ and $k = \infty$. However, their dynamics differ significantly from ours; they use lazy simultaneous best-response dynamics while we consider  asynchronous stochastic edge dynamics. Due to the nature of their update, their process always converges to either a cycle or the empty network. Not only are our dynamics more natural because coordination is not required and connections are evaluated on an individual basis, but the class of networks that are fixed points of our dynamics is nontrivial.


\section{Dynamics}

In studying the dynamics, different approaches are required for the special case where $c_\ell = 0$ (i.e., the graph is directed) and more general bidirected setting (where $c_s, c_\ell > 0$). In the latter, an agent's hand is often forced; as reachability can only occur via paths of complete edges, $u$ never has incentive to add an edge $s_{uv}$ if $\ell_{vu}$ is not present (a layman's interpretation is to say that one cannot accept a connection that is not initiated). This, in effect, speeds convergence and makes our work easier. 
We consider both settings in Sections~\ref{sec:dir_dyn} and \ref{sec:bidir_dyn} respectively.

\subsection{The Directed Setting ($c_\ell = 0$)}
\label{sec:dir_dyn}

In this section we consider the setting where $c_\ell = 0$. Recall that in this case we can assume $S_\ell = V$ and the model reduces to that for a directed graph with only speaking edges. Moreover, for the directed setting, our results only hold for $k = \infty$. Hence, for ease of notation we drop all $s$, $\ell$ and $k$ demarkations, e.g., $G = (V, E) = (V, E_s)$, $c = c_s$, $U_G = U_{G,k}^s$, and $uv = s_{uv}$.

\begin{thm}
\label{thm:converge}
The edge dynamics for $k = \infty$ converge to a stable graph. 
\end{thm}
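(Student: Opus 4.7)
The plan is to exhibit an ordinal monotone potential on the finite space of directed graphs on $V$. Since each productive move strictly changes the potential in a consistent direction, the dynamics must reach a state with no productive move --- that is, a stable graph --- after finitely many productive moves; because the dynamics are stochastic, convergence then follows with probability one.

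The natural candidate is the lexicographic pair $\Phi(G) = (|R(G)|, -|E(G)|)$, where $|R(G)| := \sum_v |R_G(v)|$ is the total reachability. For an addable move in which agent $v$ adds edge $vw$, $v$'s reachability grows by a positive integer strictly exceeding $c$, while every other agent's reachability is monotone non-decreasing under edge additions; thus $|R(G)|$ strictly increases and so does $\Phi$ in lexicographic order. For a removable move in which $v$ deletes $vw$ and her own reachability drops by $\delta_v$ (with $\delta_v < c$ by removability), the key structural lemma I would prove via a rerouting argument is: for every agent $u \neq v$, the set $L_u := R_G(u) \setminus R_{G-vw}(u)$ satisfies $L_u \subseteq L_v := R_G(v) \setminus R_{G-vw}(v)$. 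Indeed, any $u$-to-$x$ walk in $G$ that uses edge $vw$ factors through $v$, so if $v$ retains an alternate walk to $x$ in $G - vw$, it may be prepended with any $u$-to-$v$ walk (which automatically avoids $vw$) to produce a new $u$-to-$x$ walk in $G - vw$.

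This lemma yields a clean dichotomy in the small-cost regime $c \le 1$: removability forces the integer $\delta_v$ to be zero, so $L_v = \emptyset$ and consequently $L_u = \emptyset$ for every $u$. Total reachability is therefore preserved across the removal, while $|E(G)|$ strictly drops; $\Phi$ strictly increases in the second coordinate. Combined with the addable case, $\Phi$ strictly increases along every productive move, so the process halts at a stable graph after at most $O(n^2)$ productive moves, and the stochasticity of the edge dynamics guarantees that each such move is realized in finite expected time.

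The main obstacle is the regime $c > 1$, where $\delta_v$ can be a positive integer and edge removals can strictly reduce $|R(G)|$; one checks that neither $|R|$, nor $\varphi(G) = |R(G)| - c|E(G)|$, nor any immediate combination of the two is monotone. For this regime I would switch to a weak-acyclicity argument: from every graph $G$, construct an explicit finite sequence of addable and removable moves terminating in a stable graph --- for instance, first greedily saturate reachability by adding edges until no addable edge remains (which strictly increases $|R|$, so must terminate), and then greedily prune edges that are redundant for their tail. The stochasticity of the edge dynamics then ensures that any such fixed-length sequence is executed with positive probability every time the process visits $G$, so a Borel--Cantelli argument delivers convergence with probability one. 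The bulk of the technical work lies in constructing this pruning sequence and verifying that it does not create new addable edges that would demand yet another round of additions --- this is the step I expect to be hardest, since pruning a useful edge necessarily destroys reachability and can in principle re-enable previously non-addable edges elsewhere in the graph.
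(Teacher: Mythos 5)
Your potential-function argument for the small-cost regime is correct and is genuinely different from (and cleaner than) what the paper does there: the lexicographic potential $\Phi(G) = (|R(G)|, -|E(G)|)$, together with your rerouting lemma $L_u \subseteq L_v$, does show that when $c \le 1$ every productive move strictly increases $\Phi$, so the process terminates in $O(n^2)$ productive moves. (That rerouting lemma is sound, and a version of it also appears implicitly in the paper's Lemma~\ref{lem:rem5}.) But the theorem is claimed for all $c$, and the regime $c > 1$ --- which is exactly the regime where the paper's interesting stable structures such as flower graphs live --- is where the entire difficulty sits. There you correctly diagnose that no obvious potential works and fall back on weak acyclicity, but your proposed path to convergence (``saturate all addable edges, then prune'') comes with no termination argument: as you yourself note, a pruning step can destroy reachability for many agents and thereby re-enable previously non-addable edges, so the add/prune phases can in principle alternate indefinitely. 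Identifying the hard step is not the same as closing it, and as written the proof of the theorem is incomplete.

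The paper's resolution of precisely this difficulty is worth contrasting with your sketch. It reverses the order (delete all removable edges \emph{first}, then add), and, crucially, it does not add arbitrary addable edges: it decomposes the graph into strongly connected components, distinguishes ``large'' components (size $> c$), and adds only carefully targeted edges --- from leaves to designated roots of large components, between large components, or from the unique remaining large component into small root components. Each such addition, followed by a full round of deletions, provably decreases a monotone combinatorial quantity (first the number of large components, then the number of components lacking a direct edge to the unique large component, then a set of edges certified never to become addable again; Lemmas~\ref{lem:rem5}--\ref{lem:rem8}). That monotone quantity is the missing ingredient in your plan: to complete your argument you would need an analogous invariant that bounds the number of saturate/prune rounds, and the natural candidates ($|R|$, $|E|$, $\varphi$) all fail, as you observed.
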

It is a priori not clear that any dynamics should converge, or that any interesting structure could emerge at a fixed point of the dynamics.\footnote{E.g., recall that  the authors of \cite{BaGo2000} find that synchronous best-response dynamics must converge to either the cycle or the empty graph under the same conditions.}
On the other hand, as is evident in our proof, a nontrivial class of networks are fixedpoints of the edge dynamics. 
The entire proof of convergence is rather technical and requires an in-depth case analysis which we have split out into individual lemmas for the reader's clarity.
\begin{proof}[of Theorem \ref{thm:converge}]
In order to prove convergence, it suffices to show that for any initial graph $G_1$, there exists a finite sequence of graphs $G_1, \ldots, G_s$ such that $G_i \neq G_j$ for any $i \neq j$ and $G_i$ and $G_{i+1}$ differ by a single edge that is either addable or removable in $G_i$ such that $G_s$ is stable. In other words, there is a \emph{path to convergence} from any state which only uses addable or removable edges. If this path is finite, there is some (albeit tiny) probability, which is bounded away from 0, that this sequence of edges is selected. Hence, it suffices to construct such a sequence.
We  construct it as follows:
\begin{enumerate}[topsep=1pt,itemsep=0ex,partopsep=0ex,parsep=1pt]
\item Recursively delete removable edges one at a time until no removable edge remains. 
\item If there is no addable edge then we have reached a stable graph. 
\item Otherwise, partition the graph into maximal strongly connected components; the component-level graph must be a directed acyclic graph (Lemma~\ref{lem:dag}). We call a component a \emph{root} if it has no incoming edge, and a \emph{leaf} if it has no outgoing edge. It is often convenient for us to restrict to the components of size $> c$, which we call large-components; roots and leaves are then defined within the large-component graph. Note that there must be at least one large component, otherwise we contradict the fact that an edge is addable (Lemma~\ref{lem:onebig}).
\item For each root $T_i$ of the large-component graph, designate a special vertex $r_i$. Note that for any (large or small) component $C$ such that $T_i \not\subseteq R_G(C)$, the edge $xr_i$ is addable for any $x \in C$ (Lemma~\ref{lem:addtoroot}); hence, the edges specified in steps 5-7 must be addable.
\item If some leaf $L_i \neq T_i$ exists in the large-component graph rooted at $T_i$, add the edge $\ell_i r_i$ for some $\ell_i \in L_i$ and go back to step 1. 
\item Otherwise, all large-components have no edges between them in the large-component graph. If there are more than one large components, let $T_1, T_2$ be any two large-components such that $T_1 \not\subseteq R_G(T_2)$. Add the edge $r_2r_1$. If $T_2 \not\subseteq R_{G+r_2r_1}(T_1)$, also add the edge $r_1r_2$. Go back to step 1. 
\item Otherwise, there is exactly one large-component $T_1$. If there exists a small component $S_j$ that is a leaf, add the edge $s_jr_1$ for some $s_j \in S_j$ and go back to step 1.
\item Again, there is exactly one large-component $T_1$, and this set $T_1$ is reachable from every component (otherwise we would be in step 7). Moreover, there must be at least one small component $S_k$ that is a root and in which $|R_G(S_k) \setminus T_1| > c$ (Lemma~\ref{lem:step8}). Let $t_kr_k$ be an edge that reaches $T_1$ on a path from $S_k$, i.e.,  $r_k \in T_1$ and $t_k \not\in T_1$, and $t_k \in R_G(S_k)$. Add the edge $r_ks_k$ for some $s_k \in S_k$ and go back to step 1.
\end{enumerate}

We prove in Lemma~\ref{lem:addinsamecomp} that the only addable edges are between two different components; hence only inter-component edges must be addable; the above steps cover all possible types of addable edges in sequence until none remain.

After step 5+1 (alternatively 6+1), the number of large components that existed before step 5 (alternatively 6) is reduced by at least one (see Lemma~\ref{lem:rem5} and Lemma~\ref{lem:rem6} respectively). Hence, we will reach step 7 in finitely many rounds.

In step 7 there is exactly one large component (see also Lemma~\ref{lem:onebig}), and the number components \emph{without} a direct edge to $T_1$ is reduced by one. Moreover, after step 7 there are no removable edges, so the number of large components and the size of the largest component does not change in step 7+1 (Lemma~\ref{lem:rem7}). Thus, we never go back to steps 5 or 6, and after finitely many rounds, we will move on to step 8. 

In step 8 again there is exactly one large component. In step 8+1 the the number of large components does not increase, thus, we never go back to step 5 or 6. While we may go back to step 7, however, every time we complete step 8 we have removed at least one edge from being addable at any point in the future (Lemma~\ref{lem:rem8}). Hence, we can remain in steps 7 and 8 for finitely many rounds, and the process will terminate in step 2. 
\end{proof}

\begin{lem}
\label{lem:rem8}
After step $8$, we removed at least one edge from being addable at any point in the future. Moreover, the number of large components does not increase. 
\end{lem}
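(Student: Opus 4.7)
My plan is to exploit the directed cycle created by adding $r_ks_k$: together with the pre-existing $(s_k,t_k)$-path inside $R_G(S_k)$ and the edge $t_kr_k$, it produces a cycle $r_k\to s_k\to v_1\to\cdots\to v_m\to t_k\to r_k$ in $G+r_ks_k$. Consequently, $T_1$, $S_k$, and all intermediate small components on the path collapse into a single strongly connected component $T_1'\supsetneq T_1$ of $G+r_ks_k$. This single observation drives both halves of the lemma.

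For the claim that the number of large components does not increase, I would first note that $T_1'$ is the unique large component of $G+r_ks_k$, since $|T_1'|>|T_1|>c$ and the insertion of a single edge can only merge components, never create new ones. It then remains to rule out that the recursive deletion in step~$1$ of the following round splits $T_1'$ into two or more large components. For this I would show that any intra-$T_1'$ edge $uv$ whose deletion would cut off a subset of size greater than $c$ from $u$ has marginal value at $u$ exceeding $c$ and is therefore not removable; step~$1$ can consequently only trim pieces of size at most $c$ away from $T_1'$, and at most one large component can survive.

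For the claim that some edge is permanently removed from the addable set, I would take $r_ks_k$ itself as the witness. It is addable in $G$ with marginal $|R_G(S_k)\setminus T_1|-c>0$ and not addable in $G+r_ks_k$ (since already present). A direct computation using $G+r_ks_k-r_ks_k=G$ shows that removing $r_ks_k$ would drop $r_k$'s reachable set from $R_G(S_k)$ back down to $T_1$, a loss of $|R_G(S_k)\setminus T_1|>c$ vertices, so $r_ks_k$ is not removable in $G+r_ks_k$ and survives step~$1$. For each later state I would argue a dichotomy: either $r_ks_k$ is still in the graph (trivially not addable), or at the moment it was ever deleted, $r_k$ had already reached $s_k$ via an alternative path, placing $r_k$ and $s_k$ in the same strongly connected component; Lemma~\ref{lem:addinsamecomp} then forbids $r_ks_k$ from being addable, and the same conclusion continues to hold in every subsequent state in which this strong connectivity is preserved.

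The main obstacle will be making the dichotomy above airtight: I must verify that once $r_ks_k$ is deleted (if it ever is) the backward $s_k\to r_k$ portion of the cycle is still intact, so that $r_k$ and $s_k$ genuinely lie in the same SCC and not merely with a forward $r_k\to s_k$ path, and I must check that no later sequence of step-$1$ deletions or subsequent additions can break this SCC. In parallel, the step-$1$ claim that no removable edge is a $>c$-cut of $T_1'$ is the most delicate piece of the first half of the proof and seems to require tracking, inductively, which edges of $G$ can turn removable as a side effect of adding the single new edge $r_ks_k$.
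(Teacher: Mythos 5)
Your handling of the first claim (the number of large components does not increase) is fine and is essentially what the paper does: it simply reuses the argument of Lemma~\ref{lem:rem5}, noting that adding $r_ks_k$ only merges components and that a removable edge can cut off fewer than $c$ vertices, so step~1 cannot manufacture a second large component.

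The second claim is where there is a genuine gap, and it sits exactly in the dichotomy you flag. Your second branch asserts that if $r_ks_k$ is ever deleted then ``$r_k$ had already reached $s_k$ via an alternative path,'' so that $r_k$ and $s_k$ lie in one strongly connected component and Lemma~\ref{lem:addinsamecomp} applies. Removability does not give you this: $r_ks_k$ is removable when the \emph{number} of vertices reachable from $r_k$ only through that edge drops below $c$, and $s_k$ itself may be among the lost vertices. Concretely, once everything downstream of $s_k$ has been absorbed into the large component (which $r_k$ reaches anyway), the marginal value of $r_ks_k$ to $r_k$ can shrink to $s_k$ plus a handful of stragglers, fewer than $c$ in total, even though the edge is the only $r_k$-to-$s_k$ path; after deletion $r_k$ and $s_k$ sit in \emph{different} components, Lemma~\ref{lem:addinsamecomp} is silent, and nothing in your argument prevents the edge from becoming addable later. (The same issue undercuts your earlier inference that non-removability in $G+r_ks_k$ implies survival of step~1, since removability can change as step~1 deletes other edges.) The paper closes this with a quantitative invariant rather than an SCC argument: it first shows that step~1 can never disconnect $s_k$ from the large component, because a removable edge $xy$ lying on every path from $s_k$ to $T_1$ would satisfy $T_1 \not\subseteq R_{G'-xy}(x)$ and hence cost $x$ at least $|T_1| > c$ reachable vertices, a contradiction. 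Given that, either $s_k$ is permanently merged into the large component, or $r_ks_k$ was deleted precisely because $|R(s_k)\setminus T_1''| < c$; since $R(r_k)$ always contains the current large component and steps 7 and 8 never enlarge $R(s_k)$ outside of it, the potential gain from re-adding $r_ks_k$ stays below $c$ forever. Note finally that the direction you propose to shore up --- the backward $s_k \to r_k$ connectivity --- is the one that \emph{is} provable (it is exactly the paper's invariant); the forward alternative path is the part that cannot be established.
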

\begin{proof}
By the same argument as in the proof of Lemma~\ref{lem:rem5}, we know that step $8+1$ cannot increase the number of large components. 

We prove that the edge $r_ks_k$ cannot be added again in the future.
Let $T_1$ be the largest component in $G$ (before step 8) and let $T_1''$ be the largest component in $G''$ (after step 8+1). We reserve $T_1'$ and $G'$ for an intermediary graphs. 

It now suffices to argue that $T_1'' \in R_{G''}(s_k)$; hence either $s_k \in T_1''$, or $r_ks_k$ was removed in step 1 (after we added it in step 8), in which case $|R_{G''}(s_k) \setminus T_1''| < c$. Since steps 7 and 8 cannot increase $R_{G''}(s_k)$ for any future $G''$, the edge $r_ks_k$ will never become addable again. 

For sake of contradiction, assume $T_1'' \not\in R_{G''}(s_k)$. Let $xy$ be the first removable edge such that  $T_1' \not\subseteq R_{G'-xy}(s_k)$; hence, $xy$ must have been on every path from $s_k$ to $T_1'$, and in particular $x \in R_{G'}(s_k)$ and $T_1 \not\subseteq R_{G'-xy}(x)$. However, $|T_1| > c$, hence $xy$ is not removable which gives a contradiction.  
\end{proof}

\begin{lem}
\label{lem:addinsamecomp}
If $vw$ is an addable edge in $G$, then $v$ and $w$ belong to different strongly connected components of $G$.
\end{lem}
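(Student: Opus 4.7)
The plan is to prove the contrapositive: if $v$ and $w$ lie in the same strongly connected component of $G$, then $vw$ cannot be addable. The intuition is that once $w$ is already reachable from $v$ in $G$, drawing the edge $vw$ cannot enlarge $v$'s reachable set, so the agent only pays the cost $c = c_s > 0$ for no gain.

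First I would record the basic reachability identity. Since we are in the directed setting with $k = \infty$, reachability is transitive: $u \in R_G(v)$ and $v \in R_G(w)$ together imply $u \in R_G(w)$. Assuming $v$ and $w$ are in the same SCC, we have $w \in R_G(v)$ and $v \in R_G(w)$, which by transitivity yields $R_G(v) \subseteq R_G(w)$ and $R_G(w) \subseteq R_G(v)$, hence $R_G(v) = R_G(w)$.

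Next, I would compute $R_{G+vw}(v)$. Any directed path from $v$ in $G+vw$ either avoids the new edge (and so witnesses membership in $R_G(v)$), or uses it, arriving at $w$ and continuing from there in $G+vw$. So $R_{G+vw}(v) = R_G(v) \cup R_{G+vw}(w)$. The main thing to be careful about is showing $R_{G+vw}(w) = R_G(w)$: any path out of $w$ in $G+vw$ that uses the edge $vw$ must first return from $w$ to $v$ through $G$, then take $vw$ back to $w$, but this prefix is a cycle and can be deleted, yielding a path in $G$ to the same endpoint. Combined with $R_G(w) = R_G(v)$ from the previous step, this gives $R_{G+vw}(v) = R_G(v)$.

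Finally, since $d^+_{G+vw}(v) = d^+_G(v) + 1$ and $c_s > 0$, we obtain
\[
U_{G+vw}(v) = |R_G(v)| - c_s\bigl(d^+_G(v)+1\bigr) = U_G(v) - c_s < U_G(v),
\]
contradicting the addability of $vw$. The only genuinely nontrivial step is the cycle-truncation argument giving $R_{G+vw}(w) = R_G(w)$; everything else is bookkeeping with reachable sets and the utility formula.
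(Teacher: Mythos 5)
Your proposal is correct and follows essentially the same route as the paper's proof: both argue the contrapositive, use $R_G(v) = R_G(w)$ for vertices in a common strongly connected component to conclude $R_{G+vw}(v) = R_G(v)$, and then observe that the cost term strictly decreases the utility. You simply spell out the cycle-truncation justification for $R_{G+vw}(w) = R_G(w)$ that the paper leaves implicit.
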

\begin{proof}
An edge $vw$ is addable if $U_{G+vw}(v) > U_G(v)$, in other words $|R_{G+vw}(v) - R_G(v)| \geq c$. Assume that $v$ and $w$ belong to same strongly connected component. Thus $R_G(v) = R_G(w)$, and hence $R_{G+vw}(v) = R_G(v)$. Therefore the positive component of the utility for $v$ does not increase, while the negative component would decrease. Hence no such $vw$ edge is addable. 
\end{proof}

\begin{lem}
\label{lem:dag}
In step 3, the component-level graph is a directed acyclic graph. 
\end{lem}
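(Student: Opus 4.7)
The plan is to prove this by contradiction, exploiting the maximality of strongly connected components. The claim is really just the standard fact that the condensation of any directed graph is acyclic; the only thing to verify is that nothing in our setup interferes with this standard argument.

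First, I would unpack definitions. The component-level graph has one node per maximal strongly connected component of $G$, with an edge from component $C$ to component $C'$ (with $C \neq C'$) whenever there exist $u \in C$ and $v \in C'$ such that $uv \in E$. I would then suppose for contradiction that this graph contains a directed cycle $C_1 \to C_2 \to \cdots \to C_m \to C_1$ of distinct components.

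Next I would show that $\bigcup_{i=1}^m C_i$ is itself a strongly connected subset of $G$. For any two vertices $x \in C_i$ and $y \in C_j$, I can construct a directed path from $x$ to $y$ by walking within $C_i$ to its outgoing edge into $C_{i+1}$, then within $C_{i+1}$ to its outgoing edge into $C_{i+2}$, and so on around the cycle until reaching $C_j$, then routing to $y$ within $C_j$; each intra-component segment exists because each $C_i$ is strongly connected. The same argument produces a directed path from $y$ back to $x$, so every two vertices in $\bigcup_i C_i$ are mutually reachable.

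But this means $\bigcup_i C_i$ is contained in a single strongly connected component of $G$, contradicting the maximality of each $C_i$ (since $m \geq 2$ and the $C_i$ are distinct, at least two maximal SCCs would have been merged). Hence no such cycle can exist, and the component-level graph is a DAG. I do not anticipate any real obstacle here: the argument is purely structural and does not use the utility function, the parameters $c_s, c_\ell, k$, or the specifics of the edge dynamics; it holds for the condensation of any directed graph.
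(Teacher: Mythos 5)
Your proof is correct and follows essentially the same route as the paper's: assume a cycle of two or more components in the condensation, observe that the union of those components is then strongly connected, and derive a contradiction with the maximality of the components. The paper's version is just a terser rendering of the identical argument.
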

\begin{proof}
Assume for sake of contradiction, that some set of two or more components form a cycle in the component-level graph. Then, for every $x, y$ in the cycle, there must be some path from $x$ to $y$ as there is a path between any two vertices within every component, and a path between any two components via the cycle. Hence, the cycle in fact forms a strongly connected component contradicting the maximality of the partition. 
\end{proof}

\begin{lem}
\label{lem:onebig}
Any leaf must either be an isolated vertex (with no parent) or is of size at least $c$; in particular a leaf that is not a root must be of size at least $c$. Moreover, in step 3, there exists a component of size at least $c$. 
\end{lem}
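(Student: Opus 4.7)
The plan is to exploit the fact that after step~1 no edge is removable, so every edge $uv \in E$ satisfies $|R_G(u)| - |R_{G-uv}(u)| \geq c$. I split on whether the leaf component $L$ has a parent in the component DAG.

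If $L$ has a parent, pick any edge $uv$ with $u \notin L$ and $v \in L$. The key observation will be that every vertex $z \in R_G(u) \setminus R_{G-uv}(u)$ lies in $L$: if $z$ becomes unreachable from $u$ after deleting $uv$, then every $u$-to-$z$ path uses $uv$, forcing $z \in R_G(v)$, and since $L$ is a leaf in the component DAG, $R_G(v) \subseteq L$. Non-removability of $uv$ then yields $|L| \geq |R_G(u) \setminus R_{G-uv}(u)| \geq c$, which also gives the ``in particular'' clause.

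If $L$ has no parent and $|L| \geq 2$, then since SCCs of size at least two contain cycles, $L$ carries an internal edge $vw$. Applying the same shrinkage argument to $vw$ gives $R_G(v) = L$ and hence $|L| \geq c$ from non-removability. If instead $|L| = 1$, writing $L = \{v\}$, the vertex $v$ has no incoming inter-component edges (no parent), no outgoing inter-component edges (leaf), and no internal edges (singleton, no self-loop), so $v$ is isolated.

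For the final assertion, step~3 is entered only when some addable edge $vw$ exists, and addability unfolds to $|R_G(w) \setminus R_G(v)| > c$, so in particular $|R_G(w)| > c$. The plan is to pick any leaf $L$ of the component DAG that is reachable from $w$ (such a leaf exists because the DAG restricted to components reachable from $w$ has a sink). If $L$ equals $w$'s own component, then $R_G(w) \subseteq L$ forces $|L| > c$; otherwise $L$ has a parent along the path from $w$'s component to $L$, and the first part of this lemma yields $|L| \geq c$. The main subtlety throughout is the containment $R_G(u) \setminus R_{G-uv}(u) \subseteq L$ — it uses the leaf property in an essential way — but once that containment is in hand the $c$-threshold coming from non-removability does the rest of the work, and the second assertion reduces cleanly to the first.
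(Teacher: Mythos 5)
Your proof is correct and follows essentially the same route as the paper's: after step~1 no edge is removable, so the reachability lost by deleting any edge is at least $c$, and for an edge entering a leaf that lost set is confined to the leaf (the paper phrases this as $|R_G(b)| \geq c$ for every edge $ab$, which for a leaf component equals the component); the second assertion is likewise obtained by following an addable edge to a sink of the component DAG. The only cosmetic slip is that a lost vertex $z$ could be $v$ itself rather than an element of $R_G(v)$, but $z \in L$ holds either way, so nothing breaks.
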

\begin{proof}
If there is an edge $ab$ such that $|R_G(b)| < c$, then by removing $ab$, the positive component of the utility decreases by less than $c$ while the negative component decreases by $c$. Hence, $ab$ is removable, and would have been eliminated in step 1. Thus, any edge $ab$ must have $|R_G(b)| \geq c$. Consider a component $B$, and consider any edge $ab$ such that $b \in B$. If some such edge exists, then $R_G(b) = R_G(B)$, and hence $|R_G(B)| \geq c$ as shown above. If no such edge exists, then $B$ must be an isolated vertex.

In step 3, there exists some edge $ab$ that is addable. From Lemma~\ref{lem:addinsamecomp}, we know that $a$ and $b$ must be in different components $A$, $B$. By assumption, we know that $|B| < c$, however, $|R_G(b) \setminus R_G(a)| \geq c$. Thus, there must be at least one other component that is reachable from $B$. Thus, the component graph has at least one leaf, and, from above, that leaf must be of size at least $c$. Thus, there exists at least one component of size at least $c$.
\end{proof}

\begin{lem}
\label{lem:addtoroot}
In step 4, given any component $C$ such that $T_i \not\subseteq R_G(C)$, for any $x \in C$ the edge $xr_i$ is addable.
\end{lem}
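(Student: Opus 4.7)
The plan is to unpack what addable means in this directed, $k = \infty$ setting and then use the strong connectivity of $C$ and $T_i$ to control reachability before and after adding the edge. Recall that $xr_i$ being addable is equivalent to $|R_{G+xr_i}(x)| - |R_G(x)| > c$, since adding the edge increases $d_G^+(x)$ by exactly one and so pays an extra $c_s = c$ in cost.

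The first step is a key observation: since $C$ is a strongly connected component, every vertex in $C$ has the same reachable set, so $R_G(x) = R_G(C)$ for each $x \in C$; in particular $T_i \not\subseteq R_G(x)$. The second step uses the strong connectivity of $T_i$: if any single vertex of $T_i$ were in $R_G(x)$, then all of $T_i$ would be (follow the edges inside $T_i$). Combined with $T_i \not\subseteq R_G(x)$, this forces the dichotomy $T_i \cap R_G(x) = \emptyset$.

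The third step computes the gain. After adding $xr_i$, the vertex $x$ can reach $r_i$ and therefore the entire component $T_i$ (again using that $T_i$ is strongly connected), so $T_i \subseteq R_{G+xr_i}(x)$. Together with $T_i \cap R_G(x) = \emptyset$ from the previous step, we obtain
\[
|R_{G+xr_i}(x) \setminus R_G(x)| \;\geq\; |T_i| \;>\; c,
\]
where the strict inequality is by definition of a large component (size $> c$). Hence $U_{G+xr_i}(x) > U_G(x)$, so $xr_i$ is addable, as required.

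There is no real obstacle in this argument: the work is entirely in identifying that the two strong-connectivity properties (of $C$ and of $T_i$) do all the lifting, and then noting that the large-component threshold $|T_i| > c$ exactly matches the marginal cost of the new edge. The most subtle point to state carefully is the ``all-or-nothing'' reachability of a strongly connected target $T_i$, which is what rules out the pathological case where $x$ already reaches part but not all of $T_i$.
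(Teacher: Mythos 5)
Your proof is correct and follows essentially the same route as the paper's: bound the reachability gain from adding $xr_i$ below by $|T_i|$ and compare against the marginal cost $c$. In fact you are slightly more careful than the paper, since you make explicit the ``all-or-nothing'' step (strong connectivity of $T_i$ turns $T_i \not\subseteq R_G(x)$ into $T_i \cap R_G(x) = \emptyset$), which the paper's inequality $|R_G(r_i)\setminus R_G(x)| \geq |T_i|$ uses implicitly.
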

\begin{proof}
Let $T_i$ be a root of the large-component graph; trivially $|T_i| \geq c$. If $T_i$ is not reachable from a component $C$, then, for any $x \in C$ we know $|R_{G+xr_i}(x) \setminus R_{G}(x)| \geq |R_{G}(r_i) \setminus R_{G}(x)| \geq |T_i| \geq c$. Thus, $xr_i$ is addable.
\end{proof}

\begin{lem}
\label{lem:rem5}
After adding an edge as in step $5$ and then completing step $1$, the number of large components at the beginning of step $5$ has reduced by at least 1. 
\end{lem}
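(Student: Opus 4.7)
The plan is to split the argument into two phases matching the structure of step 5: first the single edge addition, then the iterative cleanup in step 1.

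For the addition phase, I would use that $L_i$ is a leaf reachable from the root $T_i$ in the large-component DAG, so there is a chain $T_i = A_0 \to A_1 \to \cdots \to A_m = L_i$ (with $m \geq 1$) of large components. Adding $\ell_i r_i$ closes this chain into a cycle at the condensation level, merging $A_0, \ldots, A_m$ into a single strongly connected component $T_i' \supseteq T_i$; in particular $T_i'$ is still large. If $N$ denotes the large-component count just before step 5, this immediately brings the count down to at most $N - m \leq N - 1$.

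The main obstacle is the cleanup phase: I must show that step 1 cannot raise the large-component count back above $N - 1$, and for this it suffices to argue that a single removable-edge deletion never increases the number of large components. Consider such a deletion of an edge $uv$: the SCC decomposition only changes if $u$ and $v$ lie in the same SCC $C$, in which case $C$ breaks into pieces $S_1, \ldots, S_k$ with $u \in S_a$. The key observation is that any simple path in $G$ from a vertex of $S_l$ to $u$ ends at $u$ without visiting $u$ earlier and therefore cannot traverse the edge $uv$, so every such path survives in $G - uv$. Thus each $S_l$ reaches $S_a$ in $G - uv$; for $l \ne a$, $S_a$ cannot in turn reach $S_l$ in $G - uv$ (else $S_l$ and $S_a$ would be strongly connected), so $R_{G-uv}(u) \cap C = S_a$ and $R_G(u) \setminus R_{G-uv}(u) \supseteq \bigcup_{l \ne a} S_l$. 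The removability inequality $|R_G(u)| - |R_{G-uv}(u)| < c$ then forces $\sum_{l \ne a} |S_l| < c$, so every piece $S_l \ne S_a$ has size less than $c$ and is therefore small. The deletion replaces the single SCC $C$ by $S_a$ (possibly still large) together with small debris, so the large-component count cannot rise.

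Combining the two phases, the large-component count is at most $N - 1$ at the end of step 5 and does not grow during step 1, yielding the claimed reduction of at least one.
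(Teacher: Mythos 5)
Your proof is correct and follows essentially the same route as the paper's: the new edge $\ell_i r_i$ closes the root-to-leaf chain and merges at least two large components, and each removable-edge deletion can only split off pieces whose total size is bounded by $c$ (via the removability inequality on the reachability set), so all new fragments other than the one containing the tail vertex are small. Your treatment of the deletion phase is in fact slightly more careful than the paper's, which speaks of the component splitting into exactly two parts, whereas you correctly allow the SCC to shatter into several pieces and bound them all at once.
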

\begin{proof}
Clearly, adding edges cannot increase the number of components. Moreover, since $L_i$ is a leaf of $T_i$, and $L_i$ and $T_i$ there is clearly a path from every $r_i \in T_i$ to every $x_i \in L_i$ in the original graph. Using the new edge $\ell_ir_i$, there is now also a path from every $x_i \in L_i$ to every $r_i \in T_i$. Hence this reduces the number of large components by at least one.

Let $G'$ be the current graph (after adding $\ell_ir_i$, and removing an unknown number of edges). We must now show that, after deleting an edge, the number of large components has not increased. 

Let $xy$ be a removable edge in $G'$ and let $X$ be the component that contains $x$. If $y \not\in X$, then removing $xy$ does not increase the number of components of $G'$. Assume $y \in X$. Let $S$ be the set of vertices that are reachable from $x$ only through $xy$. Thus, $S \cap X$ and $X \setminus S$ will form two new strongly connected components after the removal of $xy$. 
Because $xy$ is removable, we know that $|R_{G'-xy}(x) \setminus R_{G'}(x)| < c$. Note that we have exactly defined $S = R_{G'-xy}(x) \setminus R_{G'}(x)$. Hence, $|S \cap X| \leq |S| < c$. Thus we have not increased the number of large connected components in $G'$.

Thus, overall, the number of large components reduces by at least one. 
\end{proof}

\begin{lem}
\label{lem:rem6}
After adding an edge as in step $6$ and then completing step $1$, the number of large components at the beginning of step $6$ reduces by at least one. 
\end{lem}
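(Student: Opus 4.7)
The plan is to mirror the two-part analysis of Lemma~\ref{lem:rem5}: first show that the edge additions in step~6 merge $T_1$ and $T_2$ into a single strongly connected component (reducing the large-component count by at least one), and then show that the subsequent pass of step~1 cannot create new large components.

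For the addition phase, I would use strong connectivity of $T_1$ and $T_2$ directly. Adding $r_2r_1$ guarantees that for every $u\in T_2$ and every $v\in T_1$, there is a path $u\to\cdots\to r_2\to r_1\to\cdots\to v$ using only internal $T_i$-paths together with the new edge, so $T_1\subseteq R_{G+r_2r_1}(T_2)$. Now either $T_2\subseteq R_{G+r_2r_1}(T_1)$, in which case $T_1\cup T_2$ is already strongly connected, or the algorithm additionally inserts $r_1r_2$ and the symmetric argument yields strong connectivity of $T_1\cup T_2$. In both cases $T_1$ and $T_2$ collapse into a single SCC, and no other large component is destroyed (only potentially absorbed, which only helps), so the number of large components strictly drops by at least one during this phase.

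For the deletion phase, I would reuse the argument from Lemma~\ref{lem:rem5} essentially verbatim. Let $G'$ denote the graph after step~6's additions, and let $xy$ be a removable edge of $G'$. If its endpoints lie in different SCCs of $G'$, deleting $xy$ leaves the SCC partition unchanged. Otherwise $x$ and $y$ share an SCC $X$, and the removability condition gives $\lvert R_{G'}(x)\setminus R_{G'-xy}(x)\rvert < c$. Any vertex of $X$ that separates from $x$'s new SCC must lie in this ``lost'' set (otherwise it remains mutually reachable with $x$ via paths avoiding $xy$), so strictly fewer than $c$ vertices split off; the resulting split-off piece cannot be a large component. Hence each deletion preserves or decreases the number of large components, and combining with the addition phase gives the claimed net reduction of at least one.

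The main obstacle is modest and already handled inside Lemma~\ref{lem:rem5}: the careful bookkeeping of how removable-edge deletions can split an SCC, and the verification that the split-off portion is bounded in size by the reachability-loss set. The only genuinely new ingredient here---the fact that step~6's additions merge $T_1$ and $T_2$---follows immediately from the internal strong connectivity of each $T_i$ together with the case split on whether the second edge $r_1r_2$ is inserted, so the lemma reduces cleanly to the Lemma~\ref{lem:rem5} template.
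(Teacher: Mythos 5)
Your proposal is correct and follows essentially the same route as the paper's proof: the additions in step~6 merge $T_1$ and $T_2$ into one large strongly connected component (using their internal strong connectivity plus the case split on whether $r_1r_2$ is added), and the deletion phase is handled by reusing the splitting argument from Lemma~\ref{lem:rem5} to show no new large component can be created. Your write-up simply spells out the merging step in more detail than the paper does.
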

\begin{proof}
Clearly, adding edges cannot increase the number of components. Moreover, after adding $r_2r_1$ (and if necessary $r_1r_2$), the two large components $T_1, T_2$ are in the same large component in the new graph. Hence, the number of large components has reduced by at least one. 

By the same argument as in the proof of Lemma~\ref{lem:rem5}, we see that after deleting edges, the number of large components does not increase, and hence, overall, the number of large components has reduced by at least one.
\end{proof}

\begin{lem}
\label{lem:rem7}
After adding an edge as in step $7$ and then completing step $1$, the number of large components at the beginning of step $7$ does not increase and the number of components without a direct edge to $T_1$ decreases by one.
\end{lem}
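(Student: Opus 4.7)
The plan is to use Lemma~\ref{lem:onebig} as the key structural observation: since $S_j$ is a small component (size less than $c$) that is a leaf in the component graph, Lemma~\ref{lem:onebig} forces $S_j$ to be a single isolated vertex; that is, $S_j = \{s_j\}$ with no incoming and no outgoing edges in $G$. This turns out to make the analysis of step 7 much cleaner than that of steps 5 and 6.

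First I would observe that adding the edge $s_jr_1$ has a very localized effect on reachability. Since $s_j$ has no incoming edges in $G$, no vertex $v \neq s_j$ can reach $s_j$, so the new edge cannot help any such $v$ reach any new vertex; that is, for every $v \neq s_j$ and every edge $xy$, $R_{G+s_jr_1}(v) = R_G(v)$ and $R_{G+s_jr_1 - xy}(v) = R_{G-xy}(v)$. It follows that for any edge $xy$ with $x \neq s_j$, its removability status is unchanged when passing from $G$ to $G' := G + s_jr_1$; and since $G$ has no removable edges (step 1 was just completed before step 7), no such $xy$ is removable in $G'$. The edge $s_jr_1$ itself is not removable either, because removing it would return $R(s_j)$ to $\{s_j\}$, losing at least $|T_1| \geq c$ reachable vertices. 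Therefore step 1 after step 7 removes no edges and $G'' = G'$.

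Both conclusions now follow directly. Because $s_j$ has no incoming edges in $G$, the edge $s_jr_1$ does not lie on any directed cycle, so the strongly connected components of $G''$ coincide with those of $G$; in particular, the number of large components is unchanged (and hence does not increase). For the second claim, $S_j$ had no direct edge to $T_1$ in $G$ (it had no outgoing edges at all), while in $G''$ it has the direct edge $s_jr_1$ to $T_1$; since no other edges were added or removed, no other component's status changed, so the count of components without a direct edge to $T_1$ decreases by exactly one.

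The main obstacle I initially anticipated was that step 1 might cascade many edge removals and thereby unravel progress; in particular, a newly-created alternative path through $s_j$ could, in principle, make previously non-removable edges into $T_1$ suddenly removable. Lemma~\ref{lem:onebig} rules this out precisely because it forces $S_j$ to be completely isolated, so the new alternative path through $s_j$ is accessible only to $s_j$ itself.
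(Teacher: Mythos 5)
Your proof is correct and follows essentially the same route as the paper's: both arguments hinge on $s_j$ being an isolated vertex, so that the new edge $s_jr_1$ changes reachability only for $s_j$ itself, whence no edge becomes removable in step $7+1$ and the component structure is otherwise untouched. The only cosmetic difference is that you prove the "no new removable edges" claim directly by noting removability status is preserved for every $x \neq s_j$, where the paper phrases the same computation as a proof by contradiction, and you explicitly cite Lemma~\ref{lem:onebig} for the isolation of $s_j$ where the paper merely asserts it.
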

\begin{proof}
Note that, at the beginning of step 7 there is a single large component (otherwise we would be in step 5 or 6).
Clearly, adding  the edge reduces the number of components without a direct edge to $T_1$, and adding edges cannot increase the number of components or create more roots. We now show that no edge is removable when we go to step 1, everything that holds after step 7 must also hold after step 7+1. 

Let $G$ be the graph at the beginning of step 7. For sake of contradiction, assume that $xy$ is a removable edge; thus, $|R_{G+s_jr_1-xy}(x) \setminus R_{G+s_jr_1}(x)| < c$. Since $s_jr_1$ was addable in $G$, we know that $xy \neq s_jr_1$. Additionally, since $s_j$ was an isolated vertex in $G$ we know that it has no other edge and hence $x \neq s_j$. Moreover, since $s_j$ was an isolated vertex in $G$, we know it has no incoming edge, and hence $x \not\in R_{G+sjr_1}(z)$ for any $z \neq x$. Thus, $R_{G+s_jr_1-xy}(x) = R_{G-xy}(x)$ and $R_{G+s_jr_1}(x) = R_{G}(x)$, and this implies that $|R_{G-xy}(x) \setminus R_{G}(x)| < c$, and hence $xy$ was removable in $G$. This gives a contradiction, as no such $xy$ exists at the beginning of step 7 (as it would have been removed in the previous step 1).
\end{proof}

\begin{lem}
\label{lem:step8}
There  is exactly one large-component $T_1$, this set $T_1$ is reachable from every component, and there must be at least one small component $S_k$ with no incoming edge such that $|R_G(S_k) \setminus T_1| > c$.
\end{lem}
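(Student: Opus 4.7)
My plan is to prove the three assertions in order, leaning on the control-flow analysis and earlier lemmas from the proof of Theorem~\ref{thm:converge}.

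The first two assertions follow from how step 8 is reached. Uniqueness of the large component: step 6 repeatedly merges pairs of large components (Lemma~\ref{lem:rem6}) and loops back to step 1 until only one remains, after which control falls through to step 7; Lemma~\ref{lem:rem7} guarantees that step 7 cannot create new large components, so exactly one large component $T_1$ remains upon entering step 8. Reachability of $T_1$ from every component: the component graph is a DAG (Lemma~\ref{lem:dag}); upon entering step 8 no small leaf exists (otherwise step 7 would fire) and no large leaf other than $T_1$ exists (since $T_1$ is the only large component), so $T_1$ is the \emph{unique} leaf of the overall component DAG, and following outgoing component-edges from any component terminates at $T_1$.

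For the third assertion, I would first record the auxiliary fact that $T_1$ has no outgoing edges in the component DAG: otherwise acyclicity (Lemma~\ref{lem:dag}) would block a return path from the target to $T_1$, contradicting the universal reachability just established. Next, because step 2 did not terminate, some edge $vw$ is addable, and Lemma~\ref{lem:addinsamecomp} places $v, w$ in distinct components; let $W'$ be the component of $w$. Since $T_1 \subseteq R_G(v)$ (by assertion two), every newly reachable vertex lies outside $T_1$ and inside $R_G(w) \subseteq R_G(W')$, so $|R_G(W') \setminus T_1| \geq |R_{G+vw}(v) \setminus R_G(v)| > c$; in particular $W' \neq T_1$. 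Tracing backward along component-level incoming edges from $W'$ must terminate at a root $S_k$ of the component DAG (by acyclicity), and since $W'$ is a descendant of $S_k$ we have $R_G(S_k) \supseteq R_G(W')$, hence $|R_G(S_k) \setminus T_1| > c$. Finally $S_k \neq T_1$: because $T_1$ has no outgoing edges in the component DAG, it cannot be an ancestor of $W' \neq T_1$; thus $S_k$ must be small.

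The main obstacle I anticipate is the auxiliary claim that $T_1$ has no outgoing edges in the \emph{overall} component DAG. By construction $T_1$ is only known to be a leaf of the large-component subgraph, and without upgrading this to the full DAG the backward trace could in principle land on $T_1$, collapsing the argument. The upgrade rests on combining acyclicity with universal reachability, the latter being exactly what the control-flow analysis for steps 5--7 provides; this is the linchpin that makes the three assertions lock together cleanly.
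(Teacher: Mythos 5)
Your proof is correct and follows essentially the same route as the paper's: the first two claims come from the control flow of steps 5--7 (via Lemmas~\ref{lem:onebig}, \ref{lem:dag}, \ref{lem:rem6}, and \ref{lem:rem7}), and the third from the existence of an addable edge whose gain must lie outside $T_1$, traced back through the component DAG to a root that is necessarily small. The paper's own proof is far terser --- for the third claim it simply asserts ``otherwise no edge addable'' --- so your elaboration (in particular, establishing that $T_1$ is the unique leaf of the full component DAG, which is what forces every root to be small) is a faithful expansion of what the paper leaves implicit rather than a different argument.
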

\begin{proof}
Exactly one large-component $T_1$, since at least one large component exists (Lemma~\ref{lem:onebig}) and if two or more exists we  would be in step 5 or 6.
Moreover, $T_1$ is reachable from every component, otherwise we would be in step 7.
Lastly, we know that there must be some small component $S_k$ such that $|R_G(S_k) \setminus T_1| > c$, otherwise no edge addable. In particular, any $rs_k$ edge with $r \in T_1$ and $s_k \in S_k$ is addable.
\end{proof}

One important question that remains open is the time until convergence; in particular, we conjecture that the convergence time is \emph{fast}. In effect, this is equivalent to showing that there are many short paths to convergence. Proving this, however, remains a challenging technical open problem.

\subsection{The Bidirected Setting $(c_s, c_\ell > 0)$}
\label{sec:bidir_dyn}

We now go back to the bidirected setting. As observed above, the convergence is less surprising; as edges that are not complete are easily deleted. This is formalized in the following lemma, which will be useful for our result:
\begin{lem}
\label{bidirectional removed edges never added}
If $c_s, c_\ell >0$ and there exist vertices $ v,w\in V$ such that $s_{vw}\notin E_s ,\ \ell_{wv}\notin E_\ell$, then the edge and vertex dynamics will not add either edge back.
\end{lem}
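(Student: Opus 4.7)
The approach is to exploit the rigid structure of speaking/listening paths: by definition, a speaking path from $x$ using the directed edge $s_{vw}$ must simultaneously use $\ell_{wv}$ (these are the matching forward/backward edges at the $v$–$w$ step of the path). So when $\ell_{wv}\notin E_\ell$, the edge $s_{vw}$ lies on no speaking path in $G+s_{vw}$, and analogously when $s_{vw}\notin E_s$ the edge $\ell_{wv}$ lies on no listening path in $G+\ell_{wv}$. This is the one real observation; everything else is bookkeeping.

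First I would make this observation precise as a little sublemma: if $\ell_{wv}\notin E_\ell$, then for every vertex $x$ we have $R^s_{G+s_{vw},k}(x)=R^s_{G,k}(x)$, and symmetrically for adding $\ell_{wv}$ when $s_{vw}\notin E_s$. The proof is immediate from the definition of a speaking path given in Section~\ref{sec:prelim}: any path of $G+s_{vw}$ that uses $s_{vw}$ as its $i$-th speaking edge requires the $i$-th listening edge to be $\ell_{wv}$, which is absent.

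Next I would use this to show neither edge is addable so long as both are absent. Since $R^s_{G+s_{vw},k}(v)=R^s_{G,k}(v)$ and listening reachability is unaffected, while $ds^+_{G+s_{vw}}(v)=ds^+_G(v)+1$, we obtain
\[
U_{G+s_{vw}}(v)=U_G(v)-c_s<U_G(v),
\]
so $s_{vw}$ fails the addability test; the argument for $\ell_{wv}$ at $w$ is identical, using $c_\ell>0$.

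Finally I would close the argument by induction on the number of dynamics steps, starting from the graph at which the hypothesis holds. In the edge dynamics, a single round can add at most the one edge it selects, and by the previous step neither $s_{vw}$ nor $\ell_{wv}$ passes the addability test while the other is still absent; hence both remain absent after the round, and the induction proceeds. For the vertex dynamics (Definition~\ref{def:vertex_dynamics}), I would argue that a best response at $v$ cannot include $s_{vw}$: any strategy $S_v$ containing $w\in S_v^s$ can be replaced by $S_v\setminus\{w\in S_v^s\}$, which by the sublemma leaves $R^s_{G,k}(v)$ and $R^\ell_{G,k}(v)$ unchanged while reducing $ds^+_G(v)$ by one, strictly increasing $v$'s utility by $c_s$. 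The symmetric argument rules out $\ell_{wv}$ appearing in $w$'s best response. Since only one agent updates per round and neither of the two edges in question can appear in that agent's update while the other edge is absent, the inductive invariant is preserved.

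The only subtle point is the vertex-dynamics case, where one might worry that adding $s_{vw}$ together with other changes could be beneficial; the sublemma defeats this worry because reachability of $v$ is literally unchanged by toggling $s_{vw}$ alone, so the $c_s$ cost is always pure loss regardless of the rest of the strategy.
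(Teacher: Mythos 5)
Your proof is correct and rests on exactly the same key observation as the paper's: since $\ell_{wv}\notin E_\ell$, no speaking path can use $s_{vw}$, so adding it leaves $R^s_{G,k}(v)$ unchanged and costs a pure $c_s$ (and symmetrically for $\ell_{wv}$), hence neither edge ever passes the addability test. The only difference is that you spell out the induction over rounds and the vertex-dynamics case explicitly, which the paper leaves implicit; this is harmless extra care rather than a different route.
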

\begin{proof}
We will prove the contrapositive. Assume that some such $s_{vw}$ is added back. We know that $\ell_{wv} \notin E_\ell$. Thus, by the definition of reachability, there is no path from $v$ to any arbitrary node $z$ which passes through $s_{vw}$. Hence, $z \in R^s_{G,k}(v)$ if and only if $z \in S_{G+s_{vw}}^k(v)$. Therefore, $\vert S_G^k(v) \vert = \vert S_{G+s_{vw}}^k(v)\vert $. However, $ds_{G+s_{vw}}^+(v)=ds_G^+(v)+1$ so $U_{G+s_{vw}}(v)=U_G(v)-c_s$ and since $c_s>0$ thus $U_{G+s_{vw}}(v)<U_G(v)$. Therefore agent $v$ does not gain by adding $s_{vw}$ and wo not add it for ever. The similar proof used for listening edges. 
\end{proof}

Using this lemma, we give a very different proof than the one for the directed setting. In addition to working for arbitrary $k$, this proof also allows us to conclude that the convergence is \emph{fast}.

\begin{thm}
If $c_s, c_\ell >0$, then the edge dynamics, vertex dynamics  (see Definition~\ref{def:vert_dyn}) converge to a stable graph in (expected) polynomial time in the number of nodes.
\end{thm}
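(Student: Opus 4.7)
The plan is to exploit Lemma~\ref{bidirectional removed edges never added}: once both $s_{uv}$ and $\ell_{vu}$ are simultaneously absent, the dynamics never restores either of them. First I would classify each ordered pair $(u,v)$ by its \emph{pair-state} in $\{11,10,01,00\}$, indicating which of $s_{uv},\ell_{vu}$ are present; by the lemma, state $00$ is absorbing. Moreover, in states $10$ or $01$ the sole present edge is incomplete and, repeating the argument inside Lemma~\ref{prop:bi_stable}, contributes $0$ to all reachability sets while costing $c_s$ or $c_\ell>0$, so it is always strictly removable. These two facts are the engine of fast convergence.

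Next I would analyze the local Markov chain each pair undergoes. From pair-state $10$: with probability exactly $1/(2n(n-1))$ the dynamics picks $s_{uv}$, which is always removable and sends the pair to $00$; with probability at most $1/(2n(n-1))$ it picks $\ell_{vu}$, which yields state $11$ only when $\ell_{vu}$ is addable at that moment; otherwise the pair's state is unchanged. Hence, conditional on the pair ever leaving state $10$, it lands in $00$ with probability at least $1/2$, and symmetrically for state $01$. Therefore the number of sojourns of $(u,v)$ in states $10$ or $01$ before absorption is stochastically dominated by a geometric random variable with success probability $1/2$, so its expectation is $O(1)$.

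Each sojourn is bracketed by at most two edge-flip events (the entry from state $11$ and the exit to $11$ or $00$), so in expectation only $O(1)$ flips involve any given pair, and summing over the $n(n-1)$ pairs yields $O(n^2)$ expected flips overall. In every non-stable configuration at least one edge is flippable, so the per-round flip probability is at least $1/(2n(n-1))$, giving expected waiting time $O(n^2)$ per flip; combining via a straightforward conditional-expectation argument (Wald-style), the expected time to stability is $O(n^4)$, polynomial in $n$. For the vertex dynamics the same structural ingredients apply---Lemma~\ref{bidirectional removed edges never added} covers both update rules, and a best-response update cannot resurrect any dead pair---so an analogous pairwise accounting delivers a polynomial bound there as well.

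The main obstacle I expect is the probabilistic comparison inside the sojourn argument: one must justify that the per-round absorption probability from an incomplete pair-state genuinely dominates the revival probability, regardless of how the rest of the graph evolves in the meantime. The crucial asymmetry making this work is that picking the incomplete edge is \emph{unconditionally} productive (it is always removable), whereas picking the missing edge is productive only when an addability condition on the current graph happens to hold; combined with the uniform sampling over the fixed set of $2n(n-1)$ potential edges, this yields the required $1/2$ bound and drives the entire polynomial-time analysis.
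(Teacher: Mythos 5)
Your proof is correct and rests on the same two structural facts as the paper's --- an incomplete edge is always strictly removable (Lemma~\ref{certainty of removing single edge}, or the argument inside Lemma~\ref{prop:bi_stable}), and a pair with both edges absent is dead forever (Lemma~\ref{bidirectional removed edges never added}) --- but your quantitative analysis is genuinely different. The paper runs an induction on the global potential $p_G+m_G$ (complete edges plus edges), showing that a specific favourable event (removing an incomplete edge, or removing a complete speaking/listening pair in two consecutive rounds) occurs with probability at least $\bigl(\tfrac{1}{2n(n-1)}\bigr)^2$ per round, which yields an $O(n^6)$ expected-time bound. You instead bound the total number of edge flips by a per-pair absorbing-chain argument: each ordered pair visits the incomplete states $\{10,01\}$ a geometrically bounded number of times because the unconditional removability of the incomplete edge dominates the conditional addability of its partner, giving $O(1)$ expected flips per pair, $O(n^2)$ flips overall, and $O(n^4)$ expected rounds after a Wald-type argument; this is a sharper bound than the paper obtains. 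The stochastic-domination step you flag as the main obstacle does go through uniformly over histories, precisely for the reason you give: in every round from state $10$ the transition to $00$ has probability exactly $\tfrac{1}{2n(n-1)}$ while the transition to $11$ has probability at most that, whatever the rest of the graph looks like. Two small points to tighten: both proofs implicitly invoke the paper's proposition that a non-stable graph must contain an addable or removable edge, so cite it when lower-bounding the per-round flip probability; and the vertex dynamics update is ``flip all addable or removable edges of one type out of one vertex,'' not a best response, but your accounting survives because $s_{uv}$ and $\ell_{vu}$ belong to different vertices and hence are never flipped in the same round.
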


\begin{proof}
We will prove the theorem for edge dynamics; the analogous proof for vertex dynamics follows from Lemma~\ref{certainty of removing single edge}.
Let $p_G$ be the number of complete edges (see Definition~\ref{def:complete}) and let $m_G$ be the number of edges in network $G$.
We will prove the theorem by induction on $p_G+m_G$.

\paragraph{Base Case:} Suppose that $p_G+m_G=0$, thus $p_G=0,\ m_G=0$ so $G$ is empty and by Lemma \ref{prop:empty}, the empty graph is stable.

\paragraph{Inductive step:}
For sake of contradiction, assume that the dynamics do not converge to a stable graph. Thus, since the number of possible networks is finite, there must exist a sequence of networks that we cycle over. If there exists $v,w \in V$ such that $s_{vw}\in E_s$ and $\ell_{wv}\notin E_\ell$, then with  probability at least $\dfrac{1}{2}\cdot\dfrac{1}{n}\cdot\dfrac{1}{n-1} $, the edge $s_{vw}$ is chosen. By Lemma~\ref{certainty of removing single edge}, this edge is removable for $v$. Therefore, the number of edges is reduced, so $p_{G-s_{vw}}+m_{G-s_{vw}}=p_G+m_G-1$. Thus by the induction hypothesis, $G-s_{vw}$ converges to a stable network. Similarly, if $s_{vw}\notin E_s$ and $\ell_{wv}\in E_\ell$ an analogous proof follows.

Otherwise, for all ${v,w \in V}$, if $\ s_{vw}\in E_s$ then $\ell_{wv}\in E_\ell$ and if $\ell_{vw}\in E_\ell$ then $s_{wv}\in E_s$. Since $G$ is not stable, there must exist an edge which addable or removable. By Lemma~\ref{bidirectional removed edges never added} we know that no removed edge will be added again. Hence, the only possible change is to remove an edge. Suppose that $s_{vw}$ is an existing removable edge for $v$. The probability of selecting $s_{vw} \text{ and } \ell_{wv}$ consecutively is at least $\left(\dfrac{1}{2}\cdot\dfrac{1}{n}\cdot\dfrac{1}{n-1}\right)^2$. Moreover, we know that if $s_{vw}$ and $\ell_{wv}$ are selected in sequence then both will be removed as $s_{vw}$ is removable, and then $\ell_{wv}$ will be removable by Lemma \ref{certainty of removing single edge}. Thus, for new graph we have that $p_{G-s_{vw}-\ell_{vw}}+m_{G-s_{vw}-\ell_{vw}}=p_G+m_G-3$. 

The convergence happens in expected polynomial time because the probabilities, as computed above, shows that after $O(n^4)$ moves in expectation, the value of $p_G+m_G$ will decrease. We know that $p_G+m_G = O(n^2)$. Thus, after at most $O(n^6)$ moves in expectation we will reach a stable network.  
\end{proof}

\begin{defn}[Vertex Dynamics]
\label{def:vertex_dynamics}
\label{def:vert_dyn}
In each round, a vertex $v$ is selected at random, and two options are selected at random 1) speaking or listening and 2) adding or deleting. Without loss of generality, assume speaking and deleting was selected. Then, all removable $s_{vw} \in E_s$ are deleted. Similarly, if speaking and adding is selected, then all addable $s_{vw} \notin E_s$ are added. The analogous  definitions are used for listening edges. 
\end{defn}

\begin{lem}
\label{certainty of removing single edge}
Suppose that $c_s>0$. If there exists a vertex $w \in V$ such that $s_{vw} \in E_s ,\; \ell_{wv}\notin E_\ell$, then if $v$ is selected to remove a speaking edge, she will select some such $s_{vw}$ to remove. The analogous statement holds for listening edges.
\end{lem}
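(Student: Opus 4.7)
The plan is to observe that this lemma is essentially a repackaging of the argument already used in Lemma~\ref{prop:bi_stable}, now phrased in the language of vertex dynamics: any speaking edge whose reciprocal listening edge is missing is automatically removable, and therefore whenever $v$ is scheduled to delete speaking edges, every such ``one-sided'' edge out of $v$ will be removed.

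Concretely, I would proceed in two steps. First, fix $w$ with $s_{vw} \in E_s$ and $\ell_{wv} \notin E_\ell$ and show that $s_{vw}$ is removable for $v$. By the definition of a speaking path from $v$, the first pair of edges used must be $s_{v v_1}$ together with $\ell_{v_1 v}$; hence any speaking path that uses $s_{vw}$ as its first edge requires $\ell_{wv} \in E_\ell$. Since $\ell_{wv} \notin E_\ell$, no speaking path of any length from $v$ passes through $s_{vw}$, and therefore $R^s_{G-s_{vw},k}(v) = R^s_{G,k}(v)$. In particular $|R^s_{G-s_{vw},k}(v)| = |R^s_{G,k}(v)|$, while $ds^+_{G-s_{vw}}(v) = ds^+_G(v) - 1$. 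Plugging into the utility formula gives $U_{G-s_{vw}}(v) = U_G(v) + c_s > U_G(v)$ since $c_s > 0$, so $s_{vw}$ is removable.

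Second, recall from the vertex dynamics (Definition~\ref{def:vertex_dynamics}) that when $v$ is selected together with the options ``speaking'' and ``delete,'' every removable outgoing speaking edge of $v$ is deleted in that round. The previous step shows that every $s_{vw} \in E_s$ with $\ell_{wv} \notin E_\ell$ belongs to the set of removable edges, so $v$ will indeed delete some such edge (in fact all of them). The listening case is identical after swapping the roles of $s$ and $\ell$ and using $c_\ell > 0$.

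There is no real obstacle here: the only subtlety is to read off from the definition of a speaking path that an edge with a missing reciprocal listening edge can never appear as the first edge on any path out of $v$, which is exactly the observation already isolated in the proof of Lemma~\ref{prop:bi_stable}. The rest is a one-line utility computation.
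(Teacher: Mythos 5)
Your proposal is correct and follows essentially the same route as the paper: both rest on the observation that, because $\ell_{wv}\notin E_\ell$, no speaking path from $v$ can use $s_{vw}$ as its first edge, so the edge contributes nothing to $R^s_{G,k}(v)$ and deleting it strictly increases $U_G(v)$ by $c_s>0$. If anything, your version is tighter than the paper's, since you explicitly tie the conclusion to Definition~\ref{def:vertex_dynamics} (all removable edges are deleted) rather than to an informal ``least gain'' selection argument.
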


\begin{proof}
When agent $v$ decides to delete a speaking edge, it chooses an edge which has the least gain for it since by removing it, the lost be minimal. We know that the profit of existence a speaking edge is made by providing speaking reachability. Moreover since $\ell_{wv} \notin E_\ell$ and by the definition of reachability, there is no path from $v$ to arbitrary node $z$ that passes through $s_{vw}$ so the gain of it zero. Therefore removing $s_{vw}$ has the least lost for $v$ and is the best strategy among removing all possible speaking edges.
\end{proof}


\section{Stability and Efficiency}
\label{sec:stab}


\begin{figure}[t]
\begin{center}
\hspace{-1in}
\includegraphics[width = 2.36in]{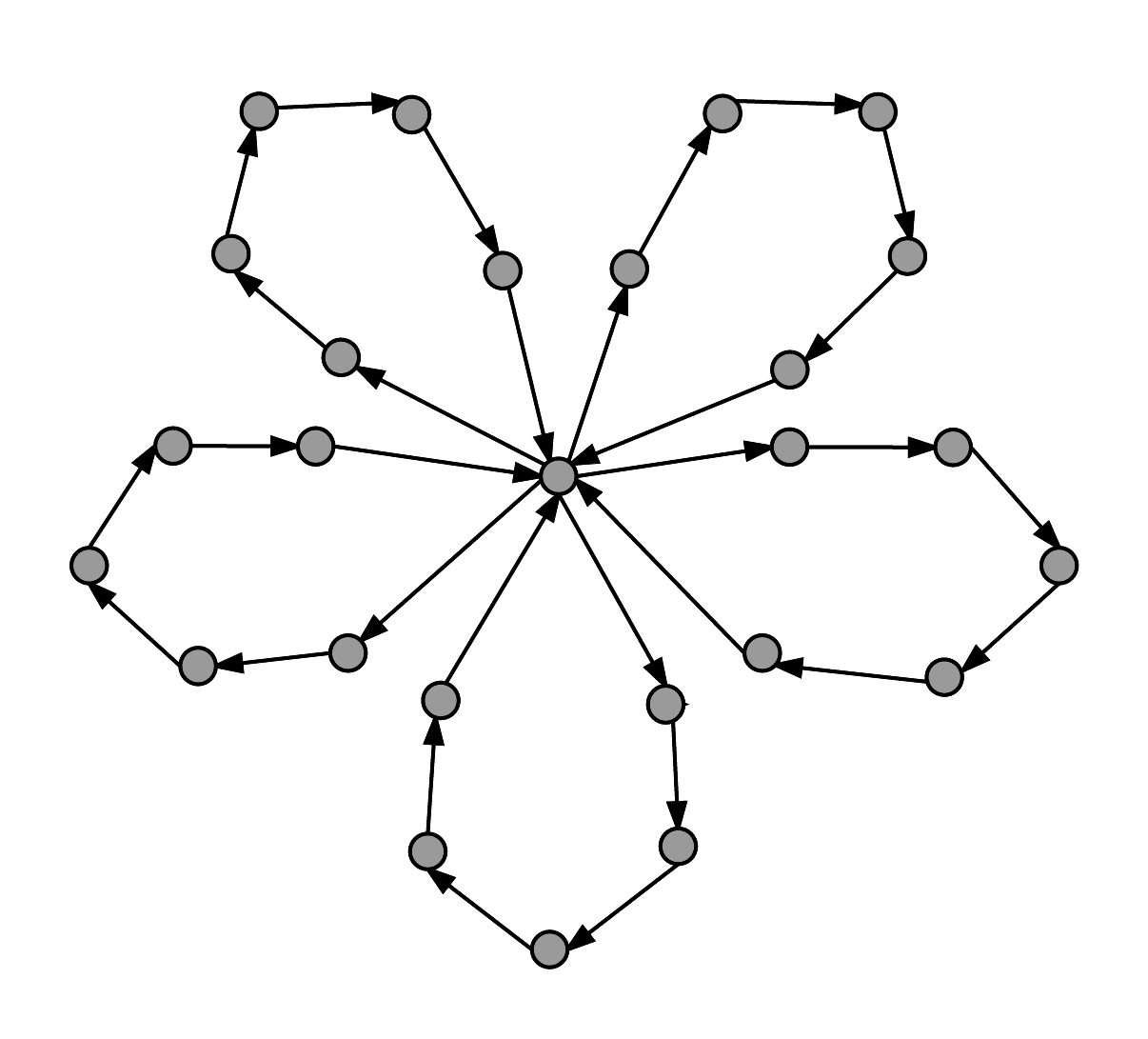}
\hspace{-1in}
\vspace{-.06in}
\caption{A balanced flower graph with $n = 26$ vertices and $k=10$. This is efficient and stable for $c < 5$.}
\label{fig:flower}
\end{center}
\end{figure}

In this section we construct classes of efficient and stable networks. 
We first consider nontrivial ranges of parameters $k > 1$ and  $c_s, c_\ell \leq n-1$ (see Section~\ref{sec:extremal} for a discussion of extremal parameter ranges and networks). 
For this range of parameters, we see interesting networks emerge in the study of efficient and stable networks. 

\begin{rem}
For ease of notation, in the first part of this section we consider only the directed version of our model (i.e., without loss of generality, $c_\ell = 0$). However, all results and proofs follow immediately for the bidirectional case by replacing the directed graphs in question by the analogous bidirected graph where each directed edge is replaced by a complete bidirected edge. In the statement of the theorems, constraints on $c$ apply to both $c_s$ and $c_\ell$.
\end{rem}

The first network we consider, called \emph{balanced flower graph}, is defined for $k \leq 2\sqrt n$ and is constructed as follows: Make a directed cycle of length {$\floor {\xfrac k 2} + 1$} . Select one node from this set and call it the \emph{center}. As long as at least $\floor {\xfrac k 2}$ nodes remain, select them, and, along with the center node, form another directed cycle. Repeat until fewer than $\floor {\xfrac k 2}$ nodes remain; then remove one non-central node from each petal (severing its edges and connecting its predecessor and successor) until you have $\floor {\xfrac k 2} - 1$ nodes and form them into the final petal. We denote by $q$ the number of petals. Note at most one node is removed from each petal in balancing since since $k \leq 2\sqrt n$, and hence {$q \geq \floor {\xfrac k 2}$}.  Note that the balanced flower graph has diameter at most $k$. 
See Figure~\ref{fig:flower} for an example.

\begin{thm}
\label{thm:flower}
For any $4 \leq k \leq 2\sqrt n$ and $3 \leq n$, the social welfare of the balanced flower graph (see Figure~\ref{fig:flower}) is  
{\[ n(n-1)- c\left\lceil{\frac{n-1}{\floor {\frac k 2}}}\right\rceil -c(n-1)\]}
Moreover, if $1 \leq c < \floor{\xfrac k 2}-1$ the balanced flower graph is efficient and pairwise stable.
\end{thm}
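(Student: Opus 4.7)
First I verify the welfare formula. The balanced flower has diameter at most $k$: any path from a vertex $u$ through the center to a vertex $v$ has length at most $\floor{k/2}$ (along $u$'s petal cycle to the center) plus at most $\floor{k/2}$ (from the center along $v$'s petal cycle), totaling at most $2\floor{k/2} \leq k$. Hence every vertex reaches all $n-1$ others, contributing $n(n-1)$ to the positive utility. Each petal with $p_i$ non-central vertices is a directed cycle on $p_i + 1$ vertices, so the total edge count is $\sum_i (p_i + 1) = (n-1) + q$, where $q$ is the number of petals. A direct count of the construction (full petals of size $\floor{k/2}$ plus one smaller final petal obtained after rebalancing) yields $q = \lceil (n-1)/\floor{k/2} \rceil$, and the welfare formula follows.

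Next I verify pairwise stability; by Lemma~\ref{pairwise_stable_if_stable} it suffices to show that no edge is addable or removable. Since every vertex already reaches all others, adding any non-existing edge strictly decreases utility by $c > 0$, so no edge is addable. For removability: every non-central vertex has a unique outgoing edge along its petal cycle, so removing it reduces that vertex's reach from $n-1$ to $0$, a loss of $n-1 > c$; the center's outgoing edge to the first node of a petal is the unique entrance to that petal, and removing it disconnects the center from the entire petal of size at least $\floor{k/2} - 1$ non-central vertices, a loss of at least $\floor{k/2} - 1 > c$ by hypothesis. Hence no edge is removable, and the flower is (pairwise) stable. Bi-pairwise stability follows by the same reachability argument applied to the addition of a complete pair $(s_{uv},\ell_{vu})$.

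For efficiency I compare $\varphi(\text{flower})$ to $\varphi(G')$ for an arbitrary graph $G'$, splitting into two cases. Case (i): every vertex of $G'$ reaches all others within $k$, so $\sum_v |R_{G'}(v)| = n(n-1)$, and it suffices to show $|E_{G'}| \geq (n-1) + q$. This is a structural edge lower bound for strongly connected digraphs of diameter at most $k$: informally, keeping the diameter small while remaining strongly connected forces some vertex to serve as a branching hub of out-degree $\geq q$, and a careful count of ``excess'' edges beyond a spanning structure gives the required bound. Case (ii): some vertex fails to reach some other, incurring a deficit $\delta > 0$ in positive utility. Here I argue that any edge saved relative to the flower costs at least $\floor{k/2}-1$ reach (because the petal structure is extremal in the sense of case (i)), so the hypothesis $c < \floor{k/2}-1$ guarantees $\delta + c|E_{G'}| \geq c[(n-1)+q]$, i.e.\ $\varphi(G') \leq \varphi(\text{flower})$.

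The main obstacle is the edge lower bound in case (i): showing that any strongly connected digraph on $n$ vertices with diameter at most $k$ has at least $(n-1) + \lceil (n-1)/\floor{k/2} \rceil$ edges. This is a delicate extremal problem that I expect to attack either by induction on $n$, peeling off a petal-like cycle through a chosen high-out-degree root, or by an exchange argument that transforms any minimum-edge configuration into a flower without increasing the edge count while preserving the diameter constraint.
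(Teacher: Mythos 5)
Your welfare computation and your stability argument essentially coincide with the paper's: the paper likewise notes that every vertex already reaches all others (so nothing is addable), that a non-central vertex losing its single out-edge drops to utility $0$, and that any deviation of the center using fewer than $q$ edges cuts off a petal of at least $\floor{\xfrac k 2}-1$ vertices, worth more than $c$. Two small cautions there: pairwise stability follows from Lemma~\ref{pairwise_stable_if_stable_and_scc} (stable and strongly connected implies pairwise stable), not from Lemma~\ref{pairwise_stable_if_stable}, which is the converse direction; and stability quantifies over arbitrary alternate strategies $S'_v$, not single-edge moves, so you are implicitly leaning on the paper's stated (but unproved) proposition that stability is equivalent to having no addable or removable edge, whereas the paper argues the center's multi-edge deviations directly.

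The genuine gap is in efficiency. Your case (i) reduces the theorem to the claim that every strongly connected digraph on $n$ vertices with diameter at most $k$ has at least $(n-1)+\lceil (n-1)/\floor{\xfrac k 2}\rceil$ edges, and you explicitly flag this as ``the main obstacle,'' offering only the heuristic that some vertex must serve as a branching hub of out-degree at least $q$. That is not a proof: a priori an edge-minimal diameter-$k$ digraph need not contain any high-out-degree hub, and establishing that it essentially must is precisely the content of the extremal theorem. The paper closes exactly this step by citation --- Theorems 1 and 2 of \cite{HiWo1979} state that the flower graph is edge-minimal among connected graphs of given order and diameter --- so without that citation or a self-contained proof the efficiency claim is not established. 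Your case (ii) is also too loose: the assertion that ``any edge saved relative to the flower costs at least $\floor{\xfrac k 2}-1$ reach'' is claimed by analogy to case (i) but never derived, and it is not clearly true edge-by-edge. The paper instead handles only the failure of strong connectivity, by an explicit computation: replace each of two components, of sizes $a$ and $n-a$, by its own balanced flower, observe that the cross-component reachability deficit is at least $a(n-a)\geq n-1$, and check that the edge-cost saving is at most about $c\leq \sqrt n$, which cannot compensate when $n\geq 3$ and $c<\floor{\xfrac k 2}-1$. To complete your argument you must either prove the extremal edge bound from scratch (a nontrivial combinatorial result) or import it from the literature as the paper does.
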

Before we prove the theorem, in order to show pairwise stability, a useful lemma is stated:
\begin{lem}
\label{pairwise_stable_if_stable_and_scc}
If $G$ be stable and strongly connected, then $G$ is pairwise stable.
\end{lem}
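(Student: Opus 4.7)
The plan is to verify the three clauses in the definition of (bi-)pairwise stability one by one. The two ``deletion'' clauses follow directly from the hypothesis that $G$ is stable, while the ``addition'' clause will be reduced to a vacuously true implication using strong connectivity.

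For the deletion clauses, stability of $G$ means that no edge is removable, so for every $s_{uv} \in E_s$ we have $U_{G - s_{uv}}(u) \leq U_G(u)$, and identically for every $\ell_{uv} \in E_\ell$. This is exactly the first two inequalities in the definition of bi-pairwise stability.

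For the addition clause, fix $u, v \in V$ and set $G' = G + s_{uv} + \ell_{vu}$. By Lemma~\ref{prop:bi_stable}, applied to the stable network $G$ in the bidirected regime $c_s, c_\ell > 0$ (the directed case $c_\ell = 0$ being immediate, since pairwise stability then coincides with stability), every edge of $G$ is complete; so either $\{s_{uv}, \ell_{vu}\} \subseteq G$ (in which case $G' = G$ and the implication is trivial) or neither edge is present. In the latter case I plan to show that already $U_{G'}(u) < U_G(u)$, so the hypothesis of the implication fails. The key use of strong connectivity will be that $R^s_{G,k}(w) = R^\ell_{G,k}(w) = V$ for every $w$; since reachability sets are monotone in edges, the same identity holds in $G'$, and the positive part of $u$'s utility is therefore unchanged. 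On the cost side, $u$'s speaking out-degree strictly increases by one while its listening out-degree is unchanged (the cost of $\ell_{vu}$ being paid by $v$), so $U_{G'}(u) = U_G(u) - c_s < U_G(u)$, as required.

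The main obstacle I anticipate is pinning down the correct reading of ``strongly connected'' in this $k$-bounded reachability setting. I plan to interpret it as $R^s_{G,k}(v) = R^\ell_{G,k}(v) = V$ for every $v$, equivalently that the complete-edge subgraph has diameter at most $k$; this is the natural reading when $k$ is finite and matches the balanced flower graph of Theorem~\ref{thm:flower}, for which the lemma is invoked. Under a weaker reading (strong connectivity only in the unbounded graph-theoretic sense), a newly added complete edge could shortcut a length-$k$ path and genuinely enlarge some reachability set; the argument would then have to bound the resulting gain by $c_s$ using stability of $G$ against single-edge additions, which is where the bulk of the technical work would go.
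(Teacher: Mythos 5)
Your proposal is correct and follows essentially the same route as the paper's proof: stability disposes of the two deletion clauses, and strong connectivity (read as full $k$-bounded reachability, so the positive component of utility is already maximal at $2(n-1)$) makes the hypothesis of the addition clause vacuous. Your extra care about complete edges and about the correct $k$-bounded reading of ``strongly connected'' is sound and in fact makes explicit an assumption the paper's three-line proof leaves implicit.
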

\begin{proof}
Since $G$ is stable, it contains no removable edge. As $G$ is strongly connected, for any vertex  $v$, the positive component of $U_G(v)$ is $2(n-1)$ and cannot be increased. Hence, there is no addable pair in $G$. Thus, $G$ is pairwise stable.
\end{proof}
\begin{proof}[of Theorem \ref{thm:flower}]
The social welfare of the balanced flower graph is 
{\[ n(n-1)- c\left\lceil{\frac{n-1}{\floor {\frac k 2}}}\right\rceil -c(n-1)\] }
as the utility of the center node is {$(n-1) - cq = (n-1) - c\left\lceil{\frac{n-1}{\floor {\frac k 2}}}\right\rceil$}, and the utility of all other nodes is $(n-1) - c$. {Hence, the social welfare is \[(n-1)\cdot ((n-1) - c) + (n-1) - c\left\lceil{\frac{n-1}{\floor {\frac k 2}}}\right\rceil = n(n-1)- c\left\lceil{\frac{n-1}{\floor {\frac k 2}}}\right\rceil -c(n-1)\]}

Let us now prove that the flower graph $G$ is stable. By Lemma \ref{pairwise_stable_if_stable_and_scc}, this will give pairwise stability. We must show that for all agents $v$, and alternate strategy $s'_v \subseteq V$, we have 
\[ U_{S_v,\mathbf S_{-v}}(v) \geq U_{S'_v,\mathbf S_{-v}}(v).\]

First, consider a vertex $v$ that is not the center; since all vertices are reachable from $v$ by paths of length at most $k$, the positive component of the utility cannot be increased. Hence, utility can only be improved by lowering the cost; however, $|S_v| = 1$, and if $|S'_v| = \emptyset$, then $v's$ utility is 0, which is strictly less than $U_G(v)$. Thus, $v$ does not have an alternate strategy that can improve her utility.

Now consider the center vertex $u$. Again, all vertices are reachable from $u$ by paths of length at most $k$, so the positive component of the utility cannot be increased, and the negative component can only be decreased by decreasing the number of outgoing edges from $u$.
For sake of contradiction, assume there is some $|S'_u| < q$ that is $u$'s best response. Note that $S'_u$ must disconnect $u$ from at least one petal. Since each petal has at least $\floor{\xfrac k 2}-1$ non-center nodes and $c < \floor{\xfrac k 2}-1$, adding the edge from $u$ to the petal to $S'_u$ would increase the utility; this contradicts the fact that $|S'_u| < q$ can be a best response. Hence, the balanced flower graph is stable.

Now we prove that the balanced flower graph is efficient. First, consider the graph that would arise if we simply connected the remainder nodes into a small petal without balancing. This is what is known as simply a \emph{flower graph}. It is known that, for any $n$ and $2 \leq k \leq n$, the flower graph attains the fewest number of edges of any connected graph on $n$ nodes with diameter $k$ (see Theorems 1 and 2 in \cite{HiWo1979}). Note that our balanced flower graph has the same number of edges as the flower graph; hence it also attains the fewest number of edges of any connected graph on $n$ nodes with diameter $k$. Therefore, the balanced flower graph is optimal amongst the set of strongly connected graphs. 

Hence, it suffices to show that a graph that is not strongly connected cannot be efficient. We prove the contrapositive. Assume there is a graph that is not strongly connected; we show that we can combine any two maximal strongly connected components into a single strongly connected component that is at least as efficient. Take two  strongly connected of size $a \geq 1$ and $n-a \geq 1$. Clearly, from above, the social welfare of each component can only improve by making it a balanced flower, and this does not affect the social welfare of the remaining graph. Since it is possible that one component is connected to the other (but not vice versa), without loss of generality the social welfare of this graph is at most  
\begin{eqnarray*}
  a(a-1) &-& c\left\lceil{\frac{a-1}{\floor {\xfrac k 2}}}\right\rceil - c(a-1) 
 + \; (n-a)(n-1) - c\left\lceil{\frac{n-a-1}{\floor {\xfrac k 2}}}\right\rceil - c(n-a-1)  \\
&&\leq n(n-1)-a(n-a) - c\left(\left\lceil{\frac{n-1}{\floor {\xfrac k 2}}}\right\rceil - 2\right)   - cn\\
&&= n(n-1) - c\left\lceil{\frac{n-1}{\floor {\xfrac k 2}}}\right\rceil -c(n-1) + c - a(n-a) \\
&&\leq n(n-1) - c\left\lceil{\frac{n-1}{\floor {\xfrac k 2}}}\right\rceil -c(n-1) + \sqrt n - (n-1),
\end{eqnarray*}
where the last inequality follows as $c < \floor{\xfrac k 2} - 1 \leq \floor{\sqrt n} - 1 \leq \sqrt n$ and $1 \leq a \leq n-1$. For $n \geq 3$, this is less than the social welfare attained by the balanced flower. 
\end{proof}

\begin{rem}
In fact, for any $2 \leq k$, the (unbalanced) flower graph which leaves the last petal at it's size without rebalancing is also efficient; the constraint is due to the fact that for $k > 2\sqrt n$ balancing as described above may not be possible. We could, instead, define a recursive balancing process that continues to steal vertices (possibly several from the same petal) until a balanced flower is reached for some $k' < k$; efficiency follows directly, as does stability for a reduced $c$ which is a function of $n$ and $k$.
\end{rem}

While the above graphs are efficient, they are highly asymmetric with a single node taking on most of the cost. Hence, we now turn our attention towards \emph{symmetric} graphs, and consider a second class of graphs, known as Kautz graphs \cite{BKT1968,II1983} (see Figure~\ref{fig:kautz}).  

The Kautz graph $K_d^{D}$ is a directed graph with $(d +1)d^{D-1}$ vertices. The vertices are labeled by strings $x_0 \ldots x_{D-1}$ of length $D$ with $x_i \in \{0, \ldots, d\}$  with $x_i \neq x_{i+ 1}$.
The set of edges is defined by
\[ \{(x_0, x_1 \ldots x_D,   x_1 \ldots x_D x_{D + 1}) | \; x_i \in \{0, \ldots, d\} \mbox{ and } x_i \neq x_{i  + 1} \}. \]
Clearly, the graph has outdegree $d$, $(d + 1)d^{D}$ edges, and diameter $D+1$.

Kautz graphs arose in the study of the following question: \emph{Given a graph with $n$ nodes and $m$ edges, what is the smallest possible diameter $k$?} Through a series of works, it was shown that Kautz graphs are asymptotically optimal with respect to this question (see \cite{MS2005} for a survey). 

In our case, we can rephrase the question as follows: \emph{Given a graph with $n$ nodes and diameter $k$, what is the smallest possible number of edges $m$?} Clearly, such a graph would maximize social welfare restricted to the set of strongly connected graphs; we can extend this result to all graphs.

\begin{figure}[t]
\begin{center}
\includegraphics[width = 2.5in]{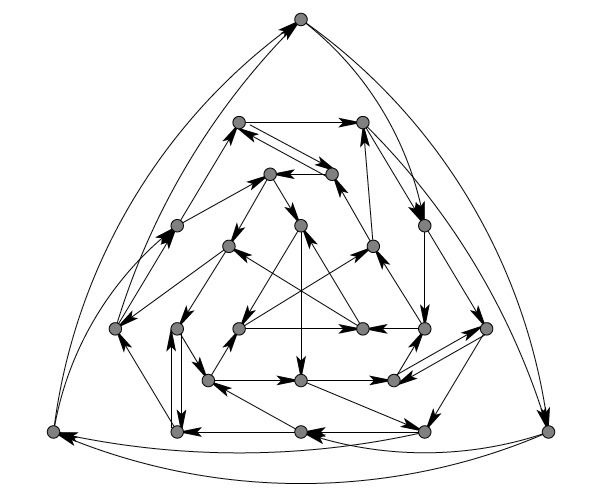}
\caption{Kautz graph with $24$ vertices, outdegree $2$ and diameter $4$. This is a symmetric-efficient graph for $n = 24$, $k = 4$, and $c < 10$, and is stable for $c < 1$.}
\label{fig:kautz}
\end{center}
\end{figure}

\begin{thm}
\label{thm:kautz}
For any $k \geq 4$, $n \geq 16$, and $c < \xfrac{n}{\log_k(n)}$ the  Kautz graph $K^{k-1}_{\log_k(n)}$ (see Figure~\ref{fig:kautz}) is asymptotically symmetric-efficient\footnote{Generally, we assume that $\log_k(n) \in \mathbb Z$, and then can make a statement about asymptotic as $n \to \infty$ for a fixed $k$ on the set of well-defined $n$.} and its social welfare is 
{\[ n ((n-1) - c{\log_k (n)}) . \]}
Moreover, for $c \leq 1$ the Kautz graph is pairwise stable.
\end{thm}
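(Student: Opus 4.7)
The proof divides into three parts: (i) computing the social welfare of the Kautz graph; (ii) establishing symmetric-efficiency; and (iii) establishing pairwise stability for $c\le 1$. Part (i) is routine: $K^{k-1}_{\log_k(n)}$ is strongly connected with diameter at most $k$ and is regular with outdegree $\log_k(n)$, so each vertex $v$ has $|R_{G,k}(v)|=n-1$ and pays cost $c\log_k(n)$; summing the per-vertex utility $(n-1)-c\log_k(n)$ over $n$ vertices yields the claimed total.

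For part (ii), take an arbitrary symmetric graph $G'$. By symmetry every vertex has a common reachability $r=|R_{G',k}(v)|$ and common outdegree $d=ds^+_{G'}(v)$, so $\varphi(G')=n(r-cd)$, and the task is to show $\varphi(G')\le n((n-1)-c\log_k(n))$ asymptotically. I would split into two cases. When $r=n-1$, the graph $G'$ is strongly connected with diameter at most $k$; the classical degree-diameter problem (cf.\ \cite{MS2005,BKT1968,II1983}) asserts that any such regular directed graph must have outdegree at least $\log_k(n)$ asymptotically, with the Kautz graph attaining this bound, giving $\varphi(G')\le n((n-1)-c\log_k(n))$. When $r<n-1$, the Moore-type bound $r\le 1+d+d^2+\cdots+d^k$ forces either $d$ close to $\log_k(n)$ (in which case the deficit $n-1-r\ge 1$ dwarfs the cost saving $c(\log_k(n)-d)$) or $d$ strictly smaller (in which case $r$ drops sharply and the deficit $n-1-r$ still dominates $c(\log_k(n)-d)$ under the hypothesis $c<n/\log_k(n)$). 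I expect this case analysis, especially quantitative control of the Moore bound as $d$ drops below $\log_k(n)$, to be the main technical obstacle of the efficiency proof.

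For part (iii), since the Kautz graph is strongly connected with diameter at most $k$, Lemma~\ref{pairwise_stable_if_stable_and_scc} reduces pairwise stability to ordinary stability; and since $r=n-1$ is maximal, no edge is addable, so it remains to rule out removable edges. For any directed edge $uv$ of $K^{k-1}_{\log_k(n)}$, I will exhibit a vertex $w$ such that every directed $u\to w$ path of length at most $k$ uses $uv$: this uses the canonical string labeling of Kautz vertices, where an edge corresponds to a left-shift with a new appended final symbol, so the unique length-$k$ path from $u$ to a suitably chosen $w$ at graph distance exactly $k$ from $u$ must begin with $uv$. Deleting $uv$ therefore costs $u$ at least one vertex in $R_{G,k}(u)$, a utility loss of at least $1\ge c$, so $uv$ is not removable. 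Producing the witness $w$ and verifying uniqueness via the label structure is the central structural step here, though it follows a standard pattern for Kautz and de Bruijn graphs.
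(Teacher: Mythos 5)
Your parts (i) and (iii) track the paper's proof closely: the welfare computation is identical, and your stability argument --- reduce pairwise stability to stability via Lemma~\ref{pairwise_stable_if_stable_and_scc}, observe that no edge is addable because reachability is already maximal, and use the string labelling to produce, for each outgoing edge $xx'$ with $x' = x_1\cdots x_D y_0$, a vertex $y$ at distance exactly $k$ all of whose short paths from $x$ must begin with $xx'$ --- is exactly the witness construction the paper carries out.

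Part (ii) is where you diverge, and where there is a genuine gap. The paper does not argue over symmetric competitors at all: it invokes the degree/diameter literature to assert that the Kautz graph is asymptotically optimal among strongly connected diameter-$k$ graphs, and then separately shows that a graph split into two strongly connected components of sizes $a$ and $n-a$ can only lose welfare, via an explicit chain of inequalities bounding $a(a-1)-cad_a+(n-a)(n-1)-c(n-a)d_{n-a}$. Your route instead restricts to symmetric graphs and immediately asserts that symmetry yields a \emph{common} reachability $r$ and a \emph{common} outdegree $d$. That inference is false under the paper's definition: symmetry only requires $|R_{G',k}(v)| - c\, ds^+_{G'}(v)$ to be equal across vertices, so one vertex may have larger reachability and larger outdegree than another with the two effects cancelling. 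Without regularity you cannot write $\varphi(G')=n(r-cd)$, and your Moore-bound case split --- which needs a uniform degree bound on \emph{all} vertices within distance $k$, not just on one --- does not get off the ground. You also flag the quantitative control of the Moore bound as the main technical obstacle and leave it unexecuted, so even granting regularity the efficiency half is a plan rather than a proof. (For what it is worth, the paper's own treatment of this step is itself loose --- edge-minimality for fixed diameter is attained by the flower graph of Theorem~\ref{thm:flower}, not the Kautz graph, so the relevant optimality must be degree-constrained --- but your argument as written does not close that gap either.)
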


\begin{proof}[of Theorem~\ref{thm:kautz}]
The Kautz graph is strongly connected and all vertices have the same degree; hence it is symmetric. Each vertex is connected to all others and each has degree $d_n = {\log_k(n)}$, hence each vertex contributes {$(n-1)-c\log_k(n)$} to the social welfare. 

As discussed in the main body of the text, the Kautz graph is known to asymptotically have the fewest number of edges necessary in order to have a graph of size $n$ with maximum diameter $k$ (see \cite{MS2005} for a survey), and hence it is asymptotically optimal amongst the set of strongly connected graphs. Analogously to the Proof of Theorem~\ref{thm:flower} (see the last paragraph), we can bootstrap this result in order to prove efficiency amongst the set of all graphs by showing that the social welfare of a graph that is not strongly connected is (asymptotically) at most the social welfare of the corresponding Kautz graph for this range of parameters. 

We prove the contrapositive. Assume there is a graph that is not strongly connected; take two strongly connected components of size $a\geq 1$ and $b = n-a$; we show that the Kautz graph of size $n$ has at least as much social welfare as this graph. Clearly, from above, the social welfare of each component can only improve by making it a Kautz graph, and this does not affect the social welfare of the remaining graph. Then, (since the two components may be connected) the social welfare is at most 
\begin{eqnarray*}
a(a-1) &-& cad_a + (n-a)(n-1) - c(n-a)d_{n-a} \\
&&= n(n-1) - a(n-a) - c\left(\log_k(a^a{(n-a)}^{n-a})\right) \\
&&\leq n(n-1) - (n-1) - c\left(\log_k({(n-1)}^{n-1})\right) \\
&&\leq n(n-1) - c\log_k({n}^{n})  + c\log_k\left(\frac{{n}^{n}}{{(n-1)}^{n-1}}\right) - (n-1) \\
&&\leq n(n-1) - c\log_k({n}^{n}) 
\end{eqnarray*}
where the last inequality follows for any constant $c$ and $k$ and large enough $n$. This is the social welfare of the Kautz graph; hence the Kautz graph is asymptotically symmetric-efficient.

We now prove stability. By Lemma \ref{pairwise_stable_if_stable_and_scc}, this also gives pairwise stability. 
Clearly, no strategy $S_x'$ such that $|S_x'| \geq |S_x|$ can improve the utility of $x$. Hence we first consider strategies $S_x' \subseteq S_x$. Let $x = x_0, \ldots, x_D$ and $y = y_0, \ldots, y_D$ be two nodes such that there is there is no $i$ such that $x_i = y_{D-i-1}, \ldots, x_D = y_0, \ldots, y_{i}$. Hence, the distance from $x$ to $y$ must be exactly $k$ in the Kautz graph. Consider any edge $xz$ such that $z_0 \neq y_0$. It must again hold that there is no $i$ such that $z_i = y_{D-i-1}, \ldots, z_D = y_0, \ldots, y_i$; otherwise there would have been some such $i$ for $x$. Hence, the distance from $x$ to $y$ is $k+1$ if we remove  edge $xx'$ where $x' = x_1,\ldots,x_D,y_0$. Hence, the strategy $S_x \setminus xx'$ does not improve $x$'s utility since the cost reduces by at most $1$ while at least one node becomes unreachable. 
Note that, for every allowable $y_0$, some such vertex $y$ exists. Hence, no strategy $S_x' \subseteq S_x$ improves $x$'s utility. Lastly, for sake of contradiction, assume there is an optimal strategy $S_x'$ such that $|S_x'| < |S_x|$. Then, there must be some allowable $y_0$ to which $x$ does not build an edge. However, as we just observed, $S_x' + xy_0$ improves the utility of $x$. This contradicts the optimality of $S_x'$. Hence, this concludes the proof.
\end{proof}

\begin{rem}
We leave open the question of whether Kautz graphs are stable for any $c > 1$, and more generally, the question of characterizing  optimal symmetric stable graphs. 
\end{rem}

\subsection{Price of Stability and Price of Anarchy}
\label{sec:extremal}
We can further observe some properties of the extremal ranges of the parameters where the stable and efficient graph structures are not as interesting. These results also lead to the following observation.

\begin{cor}
The price of anarchy is 0 and the price of stability is $1$.
\end{cor}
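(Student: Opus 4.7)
The plan is to read both quantities as ratios $\text{PoS} = \max_{G\text{ stable}} \varphi(G)/\varphi(G^*)$ and $\text{PoA} = \min_{G\text{ stable}} \varphi(G)/\varphi(G^*)$, where $G^*$ is an efficient network, and to certify each extremal value by exhibiting an explicit witness in a suitable parameter regime. With this reading, the corollary is essentially a bookkeeping consequence of results already in hand.

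For the price of stability, I would invoke Theorem~\ref{thm:flower}, which shows that for $1 \le c < \lfloor k/2 \rfloor - 1$ the balanced flower graph is simultaneously efficient and pairwise stable. Applying Lemma~\ref{pairwise_stable_if_stable} upgrades pairwise stability to stability, so the best stable equilibrium attains social welfare exactly $\varphi(G^*)$ in this range, giving $\text{PoS} = 1$ (and the empty graph or the complete graph can also serve as a witness in the trivial boundary regimes $c$ very large or $c = 0$).

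For the price of anarchy, the witness is the empty graph. The key observation is that in an edgeless network, adding any single speaking edge $s_{uv}$ enlarges $u$'s reachable set by exactly one vertex, namely $v$, since $v$ itself reaches nothing. Hence the marginal utility change is $1 - c_s$, which is nonpositive whenever $c_s \ge 1$; the analogous calculation handles listening edges. Therefore the empty graph is stable in the same parameter regime used above, with welfare $0$. At the same time, Theorem~\ref{thm:flower} guarantees $\varphi(G^*) = \Theta(n^2)$ in that regime, so the ratio $0 / \Theta(n^2) = 0$ drives $\text{PoA}$ to $0$.

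The only subtlety I anticipate is pairwise-stability of the empty graph in the bidirected case ($c_s, c_\ell > 0$): one must rule out the joint addition of $s_{uv}$ and $\ell_{vu}$. Under that deviation each endpoint acquires only one new reachable vertex while paying $c_s$ or $c_\ell$ respectively, so the argument goes through whenever $\max(c_s, c_\ell) \ge 1$, which already sits inside the parameter window of interest. Beyond this, no deeper case analysis is required: both witnesses are structurally forced — saturated reachability in the flower graph makes it stable once edge costs are moderate, while vanishing reachability in the empty graph makes every single-edge deviation unprofitable — so the proof reduces to assembling these two observations.
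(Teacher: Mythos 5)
Your argument is correct, but it routes through different witnesses than the paper does. The paper derives the corollary from its $k=\infty$ propositions: Proposition~\ref{stable cycle} and Proposition~\ref{efficient cycle} show the cycle is simultaneously the unique efficient network and pairwise (hence, by Lemma~\ref{pairwise_stable_if_stable}, Nash) stable for the full range $0 < c_s, c_\ell \leq n-1$, giving $\mathrm{PoS}=1$; combining Proposition~\ref{prop:empty} (the empty graph is stable once $c_s \geq 1$ or $c_\ell \geq 1$) with the cycle's strictly positive welfare gives $\mathrm{PoA}=0$. You instead use the balanced flower graph of Theorem~\ref{thm:flower} as the efficient-and-stable witness, which commits you to the finite-$k$ regime $4 \leq k \leq 2\sqrt{n}$ and the narrower cost window $1 \leq c < \floor{\xfrac k 2}-1$; the paper's cycle-based route is both simpler and valid over a wider parameter range, though your version has the mild virtue of exhibiting the phenomenon for finite $k$ rather than only $k=\infty$. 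Two small points worth tightening: first, your computation of the marginal gain $1-c_s$ for adding a lone speaking edge to the empty graph is really the directed-setting calculation; in the bidirected setting with $c_\ell>0$ the gain from adding $s_{uv}$ alone is $-c_s<0$ regardless, since no listening edge completes the path, so Nash stability of the empty graph is even easier than you suggest (your pairwise-stability discussion of the joint deviation $s_{uv}+\ell_{vu}$ is the right place for the $1-c_s$ computation, and it matches the paper's proof of Proposition~\ref{prop:empty}). Second, the corollary is stated without parameter qualifications, so as with the paper's own proof, one should read it as asserting the existence of a (nontrivial) parameter regime realizing these extremal values; both your regime and the paper's serve.
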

This follows from the definition of the price of anarchy and price of stability by combining Propositions \ref{prop:empty} and~\ref{efficient cycle} for the former and Propositions~\ref{stable cycle} and \ref{efficient cycle} for the latter.

\begin{prop}
\label{prop:empty}
If $c_s \geq 1$ or $c_\ell \geq 1$ then the empty graph is pairwise stable. Moreover, if $c_s > n-1$ or $c_\ell > n-1$ then the only stable graph (and hence the only pairwise stable graph) is the empty graph. 
\end{prop}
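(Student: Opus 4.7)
The plan is to handle both statements by direct utility computations around the empty graph $G_0 = (V, \emptyset, \emptyset)$. In $G_0$, every out-degree is $0$ and every reachability set is trivial, because the definition of a speaking (respectively listening) path of length at least $1$ requires both forward speaking \emph{and} backward listening edges (respectively the symmetric combination). In particular, adding a single one-sided edge produces no new path, so the only deviations that can ever increase a vertex's utility are those that complete both halves of at least one directed connection.

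For the first claim I would verify bi-pairwise stability for $G_0$ clause by clause. The two removal conditions are vacuous because there are no edges to delete, so only the addition clause must be analyzed. Adding the pair $s_{uv}$ and $\ell_{vu}$ to $G_0$ creates exactly one new length-$1$ speaking path (from $u$ to $v$) and one new length-$1$ listening path (from $v$ to $u$), and no longer paths since every other edge remains absent; hence $\Delta U(u) = 1 - c_s$ and $\Delta U(v) = 1 - c_\ell$. When $c_s \geq 1$ the former change is non-positive, so the premise ``$U_{G_0 + s_{uv}+\ell_{vu}}(u) > U_{G_0}(u)$'' in the bi-pairwise-stability implication fails and the implication holds vacuously; the symmetric check for the ordered pair $(v,u)$ covers the case $c_\ell \geq 1$.

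For the second claim I would show that no non-empty network $G$ is stable when $c_s > n-1$ (the case $c_\ell > n-1$ follows by swapping the roles of speaking and listening edges). If $G$ contains any speaking edge out of some vertex $u$, then $ds_G^+(u) \geq 1$ and $|R^s_{G,k}(u)| \leq n-1$, so $U^s_{G,k}(u) \leq (n-1) - c_s < 0$. The deviation $S'_u = (\emptyset, S^\ell_u)$ zeroes out $u$'s outgoing speaking cost and shrinks her speaking reachability to the trivial set, with non-negative utility, while leaving her listening component untouched; hence $u$ strictly improves and $G$ is not stable. If $G$ has only listening edges, then each listening path requires a backward speaking edge, so every $R^\ell_{G,k}(v)$ is trivial and outgoing listening edges are pure cost, and removing them unilaterally improves utility. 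Combined with Lemma~\ref{pairwise_stable_if_stable}, the empty graph is therefore the only stable (and only pairwise stable) network.

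The only delicate point I anticipate is the boundary behaviour at $c_s = 1$ or $c_\ell = 1$, which hinges on whether a vertex counts itself in its own reachability set and on the exact strictness of the inequalities in the bi-pairwise-stability definition. I would fix one such convention at the start; no step requires structural work beyond the utility bookkeeping above.
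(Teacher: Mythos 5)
Your proposal is correct and follows essentially the same route as the paper: the first claim reduces to the observation that completing a single pair of edges changes $u$'s utility by $1-c_s$ (and $v$'s by $1-c_\ell$) so the addition clause of bi-pairwise stability holds, and the second claim follows because any vertex with an outgoing edge has a utility component at most $(n-1)-c_s<0$, which the empty strategy beats. If anything you are more careful than the paper, which silently assumes a nonempty graph contains a speaking edge (ignoring listening-only graphs) and hides the strictness issue at $c_\ell=1$ behind a ``without loss of generality,'' whereas you treat the listening-only case separately and explicitly flag the boundary behaviour.
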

\begin{proof}[of Proposition~\ref{prop:empty}]
Assume without loss of generality that $c_s\geq1$. We will prove the contrapositive. Assume that $G$ is not pairwise stable. Since the graph is empty, there are no possible edges to remove. Thus, there must exists a pair of vertices  $v,w$ such that $U^s_{G+s_{vw}+\ell_{wv}}(v) > U^s_G(v)$ and $U^\ell_{G+s_{vw}+\ell_{wv}}(w) \geq U^\ell_G(w)$. This implies that $U^s_{G+s_{vw}+\ell_{vw}}(v) =U^s_G(v)+1-c_s > U^s_G(v)$, and thus $c_s < 1$. 

Now, suppose that $c_s >  n-1$. Note that in a stable network, the utility for any vertex $v$ is at least $0$ since this can always by attained by taking the strategy $S_v = \emptyset$. Assume that a graph $G$ is nonempty. Thus, there must exist at least one vertex $v$ such that $ds^+(v) > 0$. Let $v$ be such a vertex. Then $U^s_G(v) = |R^s_G(v)| - c_s\cdot ds^+(v) \leq (n-1) - c_s < 0$. Thus, $G$ is not stable.
\end{proof}


\begin{prop}
\label{efficient cycle}
If $k = \infty$, and $0 < c_s \leq n-1, 0 < c_\ell \leq n-1$, then the cycle is the only efficient network. 
\end{prop}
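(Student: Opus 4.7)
The plan reduces efficient networks to a simple directed-graph welfare optimization. Since $c_s, c_\ell > 0$, Lemma~\ref{prop:bi_stable} forces every edge of an efficient $G$ to be complete, so I model $G$ as a digraph on $V$ whose arcs $uv$ encode the complete pairs $(s_{uv}, \ell_{vu})$; let $m(G)$ be its arc count. By the duality $v \in R^s_G(u) \Leftrightarrow u \in R^\ell_G(v)$ noted in the preliminaries, setting $P(G)$ to be the number of ordered pairs $(u,v)$, $u \neq v$, with $v$ reachable from $u$ in this digraph gives $\sum_v |R^s_G(v)| = \sum_v |R^\ell_G(v)| = P(G)$ (self-inclusion contributes a constant that cancels in any comparison). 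Since each complete arc contributes one to each of $\sum_v ds^+(v)$ and $\sum_v d\ell^+(v)$, the total cost is $(c_s+c_\ell)\,m$, yielding the closed form
\[ \varphi(G) \;=\; 2\,P(G) \;-\; (c_s+c_\ell)\,m(G). \]
The $n$-cycle achieves $P = n(n-1)$ and $m = n$, so $\varphi(\mathrm{cycle}) = 2n(n-1) - (c_s+c_\ell)\,n$.

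I then split on strong connectivity. If $G$ is strongly connected, $P(G) = n(n-1)$ and $m(G) \geq n$ with equality iff $G$ is a Hamiltonian cycle (a standard fact), so any strongly-connected $G \neq \mathrm{cycle}$ has $\varphi(G) \leq \varphi(\mathrm{cycle}) - (c_s+c_\ell) < \varphi(\mathrm{cycle})$. If $G$ is not strongly connected, its condensation has SCCs $C_1, \ldots, C_k$ with $k \geq 2$ and sizes $n_1,\ldots,n_k$. Between any two distinct SCCs at most one direction of reachability can hold, so at least $n_i n_j$ of the $2 n_i n_j$ inter-component ordered pairs are unreachable; when $C_i$ and $C_j$ are \emph{incomparable} in the condensation, all $2 n_i n_j$ such pairs are lost. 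Combining this with the edge bound $m(G) \geq n - s$, where $s$ counts singleton SCCs (each larger SCC needs a Hamilton cycle), yields
\[ \varphi(\mathrm{cycle}) - \varphi(G) \;\geq\; 2\!\!\sum_{\{i,j\}\ \mathrm{comp.}}\!\! n_i n_j \;+\; 4\!\!\sum_{\{i,j\}\ \mathrm{incomp.}}\!\! n_i n_j \;-\; (c_s+c_\ell)\,s. \]

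The main obstacle is establishing strict positivity of this bound uniformly over the admissible parameter range, especially in extremal configurations where $s$ is large and most SCC pairs are incomparable. Whenever some SCC has size $\geq 2$ or some pair of SCCs is comparable, the contributions $2 n_i n_j$ and $4 n_i n_j$ dominate $(c_s+c_\ell)\,s$ (using $c_s + c_\ell \leq 2(n-1)$) and strict positivity is immediate. The delicate case is the empty-graph limit ($s = n$, all pairs incomparable), where the bound evaluates to $n[2(n-1) - (c_s+c_\ell)]$; this is strictly positive throughout the interior of the allowed parameter region, with the boundary $c_s = c_\ell = n-1$ requiring the hypothesis to be interpreted as strict for at least one of $c_s, c_\ell$. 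Interpolating between the strongly-connected extreme (handled in the previous paragraph) and the empty-graph extreme via the comparable/incomparable accounting above then closes the case analysis and shows that the $n$-cycle is the unique efficient network.
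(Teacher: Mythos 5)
Your setup and your treatment of the strongly connected case are sound, and your closed form $\varphi(G) = 2P(G) - (c_s+c_\ell)\,m(G)$ is correct (the cycle's welfare is indeed $2n(n-1) - n(c_s+c_\ell)$; the paper's own proof drops a factor of $2$ on the $(n-1)$ term, and you are also right that uniqueness genuinely fails at the corner $c_s = c_\ell = n-1$, where the empty graph ties). The gap is in the non-strongly-connected case: your displayed lower bound on $\varphi(\mathrm{cycle}) - \varphi(G)$ is valid but can be \emph{negative} for admissible parameters, so the claimed ``immediate'' strict positivity is false and the concluding interpolation does not close the argument. Concretely, let $G$ be a directed Hamiltonian path of complete edges $v_1 \to v_2 \to \cdots \to v_n$. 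Every SCC is a singleton, so $s = n$, and all $\binom{n}{2}$ pairs are comparable, so your bound reads $2\binom{n}{2} - (c_s+c_\ell)n = n\bigl((n-1)-(c_s+c_\ell)\bigr)$, which is negative whenever $c_s + c_\ell > n-1$ (e.g.\ $c_s = n-1$, $c_\ell = 1$), even though the true difference is $n(n-1) - (c_s+c_\ell) > 0$. The culprit is the edge bound $m(G) \geq n - s$: it discards every arc incident to a singleton SCC, so you credit $G$ with a cost saving of up to $(c_s+c_\ell)s$ when it may in fact contain $n-1$ arcs and save almost nothing. Bounding the reachability count $P(G)$ and the arc count $m(G)$ separately loses exactly the correlation between them that the proof needs.

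The paper avoids this by bounding welfare \emph{per vertex}: for each $v$, either $ds^+(v) = 0$ and $U^s(v) = 0$, or $ds^+(v) \geq 1$ and $U^s(v) \leq (n-1) - c_s$; in either case $U^s(v) \leq (n-1) - c_s$ because $c_s \leq n-1$, and likewise for listening. Summing gives $\varphi(G) \leq n\bigl(2(n-1) - (c_s+c_\ell)\bigr)$ for \emph{every} graph, which is the cycle's value, and the equality analysis (out-degree exactly one and full reachability at every vertex, together with completeness from Lemma~\ref{prop:bi_stable}) forces the cycle. To salvage your global accounting you would need to charge each vertex's lost reachability against its own saved out-degree rather than aggregating both quantities over the whole graph --- which is the paper's argument in disguise.
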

\begin{proof}
For ease of notation, we drop $k = \infty$ from the subscripts.
For an arbitrary node $v$, if $ds^+_G(v)=0$ then there does not exist a speaking path from $v$ to rest of the network, and hence  $U_G^s(v) = \vert R^s_G(v) \vert - c_s \cdot ds_G^+(v) = 0$. 
Otherwise, if $ds_G^+(v) \geq 1$, then the rest of network may be reachable from $v$. Hence,  $\vert R^s_G(v) \vert \leq n-1$ where equality occurs if and only if all other nodes are reachable from $v$.
Therefore 
\[U^s_G(v) = \vert R^s_G(v) \vert - c_s \cdot ds_G^+(v) \leq (n-1)- c_s ,\] 
and we know that $c_s<(n-1)$. Hence, in the case of equality, the utility is improved over the setting where $ds_G^+(v) = 0$.
Similarly, it follows that \[U^\ell_G(v)  \leq (n-1)- c_\ell.\] Thus we can conclude that \[U_G(v) = U^s_G(v) + U^\ell_G(v) \leq (n-1)-(c_s+c_\ell) \] so
\[\varphi(G) = \sum\nolimits_v U_G(v)\leq n\cdot((n-1)-(c_s+c_\ell)) \]
By the definition of efficiency, a network is efficient if it maximizes $\varphi$ thus if we be able to find a network say $G$ such that $\varphi(G)=n\cdot((n-1)-(c_s+c_\ell))$ then we could conclude that $G$ is efficient.

Now we consider cycle $C$ to compute $\varphi(C)$. A cycle $C$ is a closed path $v_1,v_2, \ldots v_n,v_{n+1}$ such that $ v_{n+1}=v_1,$ for all ${1\leq i\leq n} \ s_{v_iv_{i+1}} \in E_s , \ \ell_{v_{i+1}v_i} \in E_\ell$. 
Thus by the definition of reachability for any $v, w \in C, v \in R^s_{G,{\infty}}(w), v \in R^\ell_{G,\infty}(w)$ so $\vert R^s_G(v) \vert=\vert R^\ell_{G,\infty}(v) \vert=n-1$.
Moreover for any node $v \in C$ we have $ds_C^+(v)=d\ell_C^+(v)=1$. 
Therefore 
\[\varphi(G) = \sum\nolimits_v U_G(v) = \sum\nolimits_v (n-1)-(c_s+c_\ell)) = n\cdot((n-1)-(c_s+c_\ell)) .\]

Now we prove that cycle is the unique efficient network. By contradiction assume that there exists another efficient network $G \neq C$ where $\varphi(G)=n\cdot((n-1)-(c_s+c_\ell)) $. According to the first part of proof, the only condition for holding this equality is that for all ${v\in V}$ it holds that $ds_G^+(v)=d\ell_G^+(v)=1, \   \vert R^s_G(v) \vert=\vert R^\ell_G(v) \vert=n-1$, thus $G$ is connected in other words, for all ${v,w\in V}$, we have that $v \in R_{G}^s(w)$ and $v \in R_{G}^\ell(w)=1$.  Moreover since the speaking outdegree of each node is one, so if we only consider speaking edges, we will have one cycle on the nodes. Now we want to show that listening edges are exactly in opposite direction of speaking edges. If not, there must exist pair $(v,w)$ such that $s_{vw}\in E_s, \ell_{wv}\notin E_\ell$ thus by the same proof used in proposition \ref{prop:bi_stable}, node $v$ could improve her utility by removing $s_{vw}$ and this action dos not change other nodes' utility. This improvement in social welfare contradicts with efficiency of $G$.      
\end{proof}

\begin{prop}
\label{stable cycle} For  $k = \infty$,  if $0 \leq c_s \leq n-1$ and $0 \leq c_\ell \leq n-1$, then the cycle $C$ is pairwise stable.
\end{prop}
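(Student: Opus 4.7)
The plan is to invoke Lemma~\ref{pairwise_stable_if_stable_and_scc}, which reduces pairwise stability of $C$ to showing that $C$ is strongly connected and (Nash) stable. Strong connectivity follows directly from the structure of $C$: for any $v_i, v_j$, traversing the cycle produces the speaking edges $s_{v_i v_{i+1}}, \ldots, s_{v_{j-1} v_j}$ together with the required reverse listening edges $\ell_{v_j v_{j-1}}, \ldots, \ell_{v_{i+1} v_i}$, all of which lie in $E_s \cup E_\ell$ by the definition of $C$. Hence every vertex is both speaking- and listening-reachable from every other, so $|R^s_C(v)| = |R^\ell_C(v)| = n-1$ for all $v \in V$.

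For stability, I will use the characterization stated after the definition of stability, namely that $\mathbf{S}$ is stable iff no edge is addable or removable. To rule out removable edges, consider any existing speaking edge $s_{v_i v_{i+1}}$. Since $ds^+_C(v_i) = 1$, removing this edge leaves $v_i$ with no outgoing speaking edge, so $|R^s_{C - s_{v_i v_{i+1}}}(v_i)| = 0$, a drop of $n-1$, while the cost savings is $c_s$. The net change in utility is $-(n-1) + c_s \leq 0$ by the hypothesis $c_s \leq n-1$, so the edge fails to be strictly improving. The symmetric computation with $c_\ell \leq n-1$ handles listening edges, showing no edge is removable. To rule out addable edges, observe that strong connectivity already gives $|R^s(v)| = n-1$ and $|R^\ell(v)| = n-1$ for every $v$, so inserting any new speaking or listening edge cannot raise the positive component of any vertex's utility, while the cost term weakly decreases it by $c_s$ or $c_\ell \geq 0$; hence no edge is addable.

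Combining the two observations, $C$ is stable and strongly connected, so Lemma~\ref{pairwise_stable_if_stable_and_scc} gives pairwise stability. I expect no serious obstacle in this argument: once the definition of speaking/listening reachability is unwound on a cycle, the whole claim reduces to the inequality $-(n-1) + c \leq 0$ for $c \in \{c_s, c_\ell\}$, which is exactly the content of the hypothesis. The only point to be careful about is that removability is defined with a strict inequality, so the boundary case $c = n-1$ still leaves $C$ stable (no strict gain from deletion), and the boundary case $c = 0$ still leaves $C$ stable (no strict gain from addition), making the full range $0 \leq c_s, c_\ell \leq n-1$ tight for this proof.
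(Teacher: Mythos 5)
Your proposal is correct and follows essentially the same route as the paper: both arguments establish $|R^s_C(v)|=|R^\ell_C(v)|=n-1$ with out-degrees $1$, check the four single-edge deviations (add/remove a speaking/listening edge) using $c_s, c_\ell \le n-1$ for removals and the saturated reachability for additions, and then invoke Lemma~\ref{pairwise_stable_if_stable_and_scc} to upgrade stability to pairwise stability. Your explicit treatment of the boundary cases $c=0$ and $c=n-1$ via the strictness in the definitions of addable and removable edges is, if anything, slightly more careful than the paper's write-up.
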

\begin{proof}
We first prove stability, and then use Lemma~\ref{pairwise_stable_if_stable_and_scc} to show that pairwise stability must also hold. Using the same proof presented in Proposition \ref{efficient cycle}, we know that for all $ {v\in V}$, we have $\vert R^s_G(v) \vert=\vert R^\ell_G(v) \vert=n-1$ and $ds_C^+(v)=d\ell_C^+(v)=1$. To prove the stability of $C$ we must show that for all agents  $v$ and any alternate strategy $S'_v \subseteq V$ , we have 
\[ U_{S_v,\mathbf S_{-v}}(v) \geq U_{S'_v,\mathbf S_{-v}}(v) .\]
There are only four possible moves for node $v_i$.
\begin{enumerate}
\item
Adding $s_{v_i w}$ such that $w\neq v_{i+1}$
\item
Removing $s_{v_iv_{i+1}}$
\item
Adding $\ell_{v_i w}$ such that $w\neq v_{i+1}$
\item
Removing $\ell_{v_iv_{i+1}}$.

\end{enumerate}
Now we show that after any such change, the utility of $v_i$ will decrease. Without loss of generality, consider a change to a speaking edge. We know that $\vert R^s_C(v) \vert=\vert R^\ell_C(v) \vert=n-1$ which is the maximum possible, so by adding a new speaking edge we could not gain;  however we will loose $c_s$ thus $U_{C+s_{v_iw}}(v_i)=U_C(v_i)-c_s$ and since $c_s>0$ so $U_{C+s_{v_iw}}(v_i)<U_C(v_i)$. In the case of deleting, if we remove a speaking edge, then $ds_C^+(v_i)=0$ thus for all ${w\in V ,w \neq v_i}$ we have $w \notin R^s_C(v_i)$ so $\vert R^s_C(v_i) \vert=0$. Therefore, $U_{C-s_{v_iw}}(v_i)=U_C(v_i)+c_s-(n-1)$, and since $c_s<(n-1)$ so $U_{C-s_{v_iw}}(v_i)<U_C(v_i)$. The same reasoning used for listening edges. Hence the cycle is stable. By Lemma~\ref{pairwise_stable_if_stable_and_scc}, it is also pairwise stable.
\end{proof}

%

We conclude this section with a characterization of all efficient and stable networks for $k=1$. The proof follows directly from the definitions.
\begin{prop}
\label{prop:complete}
Assume $k = 1$. If $c_s, c_\ell > 1$ then the empty graph is efficient and is the only stable graph. If  $c_s, c_\ell \leq 1$ then the complete graph is efficient; if $c_s < 1$ or $c_s < 1$ then the complete graph is the only only stable graph. If $c_s = c_\ell = 1$, then all graphs are efficient and stable.
\end{prop}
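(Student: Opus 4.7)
The plan is to reduce everything to a single structural observation: when $k=1$, a vertex $u$ lies in $R^s_{G,1}(v)$ if and only if the pair $(s_{vu},\ell_{uv})$ is complete (both directed edges are present), and symmetrically for listening reachability. In particular, in any network $G$, a complete edge pair $\{s_{vw},\ell_{wv}\}$ contributes exactly $1+1-c_s-c_\ell = 2-(c_s+c_\ell)$ to the social welfare, and any \emph{incomplete} edge contributes strictly negatively (only cost, no reach). By Lemma~\ref{prop:bi_stable}, efficient and stable networks (for $c_s,c_\ell>0$) have only complete edges, so we only need to count complete pairs.

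\medskip
First I would dispatch the efficiency claims. If $c_s,c_\ell>1$, every complete pair contributes $2-(c_s+c_\ell)<0$, so the unique welfare maximizer is to have no such pairs, i.e., the empty graph. If $c_s,c_\ell\leq 1$, each complete pair contributes $\geq 0$, so adding all $n(n-1)$ pairs maximizes welfare, giving the complete graph. If $c_s=c_\ell=1$, every complete pair contributes exactly $0$, so any graph whose edges are all complete achieves the same optimal welfare of $0$.

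\medskip
Next I would handle the stability claims by direct case analysis on addable and removable edges. For $c_s,c_\ell>1$: in any non-empty $G$, pick any edge, say $s_{vw}\in E_s$. If $\ell_{wv}\notin E_\ell$, the edge contributes nothing to $v$'s reach but costs $c_s$, hence is removable; if $\ell_{wv}\in E_\ell$, removing $s_{vw}$ costs $v$ one unit of reach but saves $c_s>1$, so is still removable. Thus only the empty graph is stable. For the complete graph when $c_s<1$ (say) and $c_\ell\leq 1$: suppose a pairwise-stable $G$ omits some complete pair $(v,w)$; then by Lemma~\ref{prop:bi_stable} both $s_{vw}$ and $\ell_{wv}$ are absent, and adding them simultaneously gives $v$ the strict gain $1-c_s>0$ and $w$ the weak gain $1-c_\ell\geq 0$, violating bi-pairwise stability. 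Hence $G$ must be complete. For $c_s=c_\ell=1$: any graph $G$ whose edges are all complete has no removable edge (each gives zero marginal change) and no bi-pairwise deviation is strictly improving on either side, so every such $G$ is (pairwise) stable.

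\medskip
The only real subtlety — and what I would flag as the main obstacle — is resolving the \emph{notion} of stability used in the third assertion. Read as Nash stability, the claim ``the complete graph is the only stable graph'' is too strong when $c_s,c_\ell\in(0,1)$: the empty graph is then a Nash equilibrium because no single agent can unilaterally create reachability (she needs her counterpart's edge on the other side). The argument therefore runs through cleanly if we interpret ``stable'' as bi-pairwise stable (consistent with the paper's convention stated after Definition of bi-pairwise stability), in which case the pair-wise addition deviation above does the job; otherwise one must restrict to the boundary cases ($c_s=0$ or $c_\ell=0$) where unilateral addition of the missing side does strictly increase utility and the Nash argument closes. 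I would state this interpretation explicitly at the start of the proof.
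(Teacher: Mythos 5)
Your proof is correct, and since the paper offers no proof of Proposition~\ref{prop:complete} beyond the remark that it ``follows directly from the definitions,'' your write-up supplies exactly the direct argument that remark gestures at: for $k=1$ reachability decomposes pair-by-pair, each complete pair $\{s_{vw},\ell_{wv}\}$ contributes $2-(c_s+c_\ell)$ to welfare, incomplete edges contribute strictly negatively, and all the efficiency and stability claims fall out by sign analysis together with Lemma~\ref{prop:bi_stable}. Two of the issues you surface are genuine defects of the statement rather than of your argument, and are worth recording explicitly. First, as you observe, for $0<c_s,c_\ell<1$ the empty graph is a Nash equilibrium (a unilateral addition of $s_{vw}$ without $\ell_{wv}$ yields no new reachability and costs $c_s>0$), so ``the complete graph is the only stable graph'' holds only under the bi-pairwise reading of stability; your explicit declaration of that reading at the outset is the right fix, though note that the boundary subcase $c_s=1$, $c_\ell<1$ still needs a word, since there the initiating agent's gain from the joint addition is $1-c_s=0$ and the bi-pairwise condition as defined is vacuously satisfied. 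Second, the final clause ``all graphs are efficient and stable'' for $c_s=c_\ell=1$ is false as written: a graph containing an incomplete edge has strictly negative welfare and a removable edge. Your restriction to graphs whose edges are all complete is the correct statement, and you should make that correction explicit rather than silently proving the weaker (true) claim.
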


\section{The Generalized Clustering Coefficient}
\label{sec:clustering}

There is a wide literature on networks and their social and economic properties (see \cite{EK2010}).
A clustering coefficient is a measure of the degree to which nodes in a graph tend to be highly interconnected in small clusters. 
While clustering is high in many real-world networks, evidence suggests that social networks in particular are highly clustered \cite{HL1971,WaSt1998}. 
Many (1-dimensional) measures of clustering have been defined  \cite{LP1949,WaSt1998,OP2009,Opsa2013}. 
These clustering coefficients capture statistics about local connections; i.e., vertices that are apart by at most two edges. 
Our $k$-dimensional clustering coefficient generalizes these approaches by considering clustering between vertices that are at most $k+1$ edges apart.

Inspired by our model, we define a generalized clustering coefficient below.
We let $f(G, c, k) \in [0,1]$ be the percentage of removable edges in $G$ for the model specified by $c \in \R_+, k \in \{2, 3, \ldots, \infty\}$. 
We can define an analogous local version $f(v, c, k) \in [0,1]$ as the percentage of removable edges going out from $v$. 
These functions implicitly define clustering coefficients as a function of our model parameters. 
In particular, we define two specific such clustering coefficients which we believe are of general interest. 
\begin{definition}
\label{def:gencc}
The \emph{generalized (global) clustering coefficient of dimension $k$} of a graph $G$ to be the $k$-dimensional vector $\mathbf g$ where
\[ g_i(G) = f(G, 1, i+1, T(\cdot) = V).  \]
Similarly, the \emph{generalized (local) clustering coefficient of dimension $k$} of a vertex $v$ to be the $k$-dimensional vector $\mathbf h$ where
\[ h_i(G) = f(v, 1, i+1, V).  \]
\end{definition}

For any $c \leq 1$, an edge $uv$ is removable \emph{if and only if} there is a path of length at most $k$ from $u$ to $v$ that does not use $uv$. Hence, the 1-dimensional generalized clustering coefficient makes a statement about the number of edges $uv$ that participate in at least one closed triangle. 

\begin{rem}
Our definition of clustering is closely related to the class of clustering coefficients which defined to be the percentage of (potential) triangles that are closed by an edge \cite{WaSt1998}. We instead measure the percentage of edges that close at least one triangle. Similarly, we can generalize the original definition to higher $k$: it would measure the percentage of (potential) cycles of length at most $k$ that are closed, while we measure the percentage of edges completing at least one cycle of length at most $k$.
\end{rem}


\section{Discussion \& Future Work}
\label{sec:conclusion}


For the directed setting $(c_\ell = 0)$, we show that for $k = \infty$, asynchronous stochastic edge dynamics converge to a stable network; moreover these fixed points can have non-trivial network structure. Our proof does not generalize to the case of $k < \infty$, and we leave open the question of whether these dynamics converge in that setting. Proving bounds on the time to convergence and understanding the regions of attraction would also be of interest as they inform the distribution of networks we would expect to see from a generative version of this model.

For the bidirected setting, we show that for arbitrary $k$ the edge dynamics exhibit fast converge. Moreover, vertex dynamics (in which multiple addable or removable edges are modified) are also fast to converge. However, developing an understanding of asynchronous best-response dynamics remains open as an interesting line for future work; our results do not generalize as there is no reason why a best response would be limited to only changing addable and removable edges (given two addable edges, it is not necessarily optimal to add both).  

With respect to the static game, we give classes of efficient networks (for $k \lesssim \sqrt n$) and symmetric-efficient networks (for $4 \leq k$). While the former are stable for any $c_s, c_\ell \lesssim \xfrac k 2$, the latter are only stable for $c_s, c_\ell \leq 1$; for the latter we rely on a long line of work from combinatorics, and determining whether any symmetric stable network exists appears to be a deep and technically challenging open problem.

Lastly, a natural generalization of our model is to define target sets $T^s(v), T^\ell(v) \subseteq V$ and consider utilities
\[ U^s_{G,k}(v)  = \vert R^s_{G,k}(v) \cap T^s(v) \vert - c_s \cdot  ds_G^+(v)  \quad \mbox{ and } \quad 
 U^\ell_{G,k}(v)  = \vert R^\ell_{G,k}(v) \cap T^\ell(v) \vert - c_\ell \cdot  d\ell_G^+(v) . \]
Such a model captures agents who participate in a larger network, but are only interested in connecting to some subset of nodes. Understanding the stability and efficiency of networks under certain classes of target sets would be of interest.



\bibliographystyle{ACM-Reference-Format-Journals}
\bibliography{socialmedia}

\end{document}